\declaretheoremstyle[headfont=\normalfont]{normalhead}
\newcommand\R{\mathbb{R}}
\newcommand\E{\mathbb{E}}
\newcommand{\calH}{\mathcal{H}}
\newcommand{\calL}{\mathcal{L}}
\newcommand{\calM}{\mathcal{M}}
\newcommand{\calO}{\mathcal{O}}
\newcommand{\calP}{\mathcal{P}}
\newcommand{\calT}{\mathcal{T}}
\newcommand{\calV}{\mathcal{V}}
\newcommand{\calW}{\mathcal{W}}
\newcommand{\ra}{\rightarrow}
\newcommand{\op}{\operatorname{op}}
\renewcommand\epsilon{\varepsilon}
\newcommand{\vect}[1]{\bm{#1}}
\DeclareMathOperator*{\argmin}{arg\,min}
\DeclareMathOperator*{\esssup}{ess\,sup}
\newcommand{\M}{\calM}
\newtheoremstyle{mydef}
{\topsep}{\topsep}%
{}{}%
{\itshape}{}
{\newline}
{%
  \rule{\textwidth}{0.0pt}\\*%
  \thmname{#1}~\thmnumber{#2}\thmnote{\-\ #3}.\\*[-1.5ex]%
  \rule{\textwidth}{0.0pt}}%
\begin{document}

\newtheorem{conjecture}{Conjecture}
\newtheorem{proposition}{Proposition}
\newtheorem{theorem}{Theorem}[section]
\newtheorem{question}{Question}
\newtheorem{remark}{Remark}
\newtheorem{proposal}{Proposal}
\newtheorem{lemma}{Lemma}[section]
\newtheorem{corollary}{Corollary}[section]
\newtheorem{observation}{Observation}[section]
\newtheorem{assumption}{Assumption}[section]

\author[1]{Eardi Lila}

\author[1,2]{Wenbo Zhang}
\author[3]{Swati Rane Levendovszky}
\author[ ]{\\for the Alzheimer's Disease Neuroimaging Initiative\footnote{\noindent Data used in preparation of this article were obtained from the Alzheimer's Disease Neuroimaging Initiative (ADNI) database
(adni.loni.usc.edu). As such, the investigators within the ADNI contributed to the design and implementation of ADNI and/or
provided data but did not participate in analysis or writing of this report. A complete listing of ADNI investigators can be found at:
\url{http://adni.loni.usc.edu/wp-content/uploads/how_to_apply/ADNI_Acknowledgement_List.pdf}}}

\affil[1]{Department of Biostatistics, University of Washington, USA}
\affil[2]{Department of Statistics, University of California, Irvine, USA}
\affil[3]{Department of Radiology, University of Washington, USA}

\date{}

\title{Interpretable discriminant analysis for functional data supported on random nonlinear domains with an application to Alzheimer's disease}
\maketitle
\begin{abstract}
\hskip -.2in
\noindent
We introduce a novel framework for the classification of functional data supported on nonlinear, and possibly random, manifold domains. The motivating application is the identification of subjects with Alzheimer's disease from their cortical surface geometry and associated cortical thickness map. The proposed model is based upon a reformulation of the classification problem as a regularized multivariate functional linear regression model. This allows us to adopt a direct approach to the estimation of the most discriminant direction while controlling for its complexity with appropriate differential regularization. Our approach does not require prior estimation of the covariance structure of the functional predictors, which is computationally prohibitive in our application setting. We provide a theoretical analysis of the out-of-sample prediction error of the proposed model and explore the finite sample performance in a simulation setting. We apply the proposed method to a pooled dataset from the Alzheimer's Disease Neuroimaging Initiative and the Parkinson's Progression Markers Initiative. Through this application, we identify discriminant directions that capture both cortical geometric and thickness predictive features of Alzheimer's disease that are consistent with the existing neuroscience literature.
\end{abstract}

\section{Introduction}\label{sec:introduction}

Functional discriminant analysis, a statistical framework used to predict categorical outcomes from functional predictors, has been extensively studied within the field of Functional Data Analysis (FDA) \parencite{ramsay2015functional,ferraty2006nonparametric,horvath2012inference,hsing2013theoretical} and has motivated a large body of literature \parencite[see, e.g.,][]{james2001functional,muller2005functional,preda2007regression,
delaigle2012achieving,dai2017optimal,berrendero2018use,
kraus2019classification,park2021sparse}. 
However, most of the existing methods are concerned with the classification of functions supported on one-dimensional linear domains, which can be a limiting assumption in many modern biomedical applications \parencite{zhu2023statistical}. On the other hand, recent work on the analysis of functional data with manifold structure has mostly focused on vector-valued functions with non-Euclidean constraints in the image space \parencite[see e.g.,][]{su2014statistical,dai2018principal,lin2017extrinsic,
dubey2020functional,kim2021smoothing}.

In this paper, motivated by the analysis of modern multi-modal imaging data, we propose a novel functional discriminant analysis model that can handle functional predictors supported on nonlinear sample-specific manifold domains, which we term Functions on Surfaces (FoSs) \parencite{lila2020statistical}. An example of such `object data' \parencite{marron2021object} is provided in Figure~\ref{fig:setting}A, illustrating our motivating application of identifying subjects with Alzheimer's disease from FoSs that are subject-specific cortical surfaces coupled with cortical thickness measurements. The statistical analysis of these object data poses unique statistical challenges. This is mainly due to the non-Euclidean structure of each individual domain, which makes it difficult to define appropriate spatial regularization, and due to the more abstract non-Euclidean structure of the latent space where the random domains are supported, which further invalidates traditional linear statistical models \parencite{kendall1984shape,grenander1998computational,
dryden2016statistical,younes2019shapes}. 


\begin{figure}[!htb]
\centering
\includegraphics[width=0.7\textwidth]{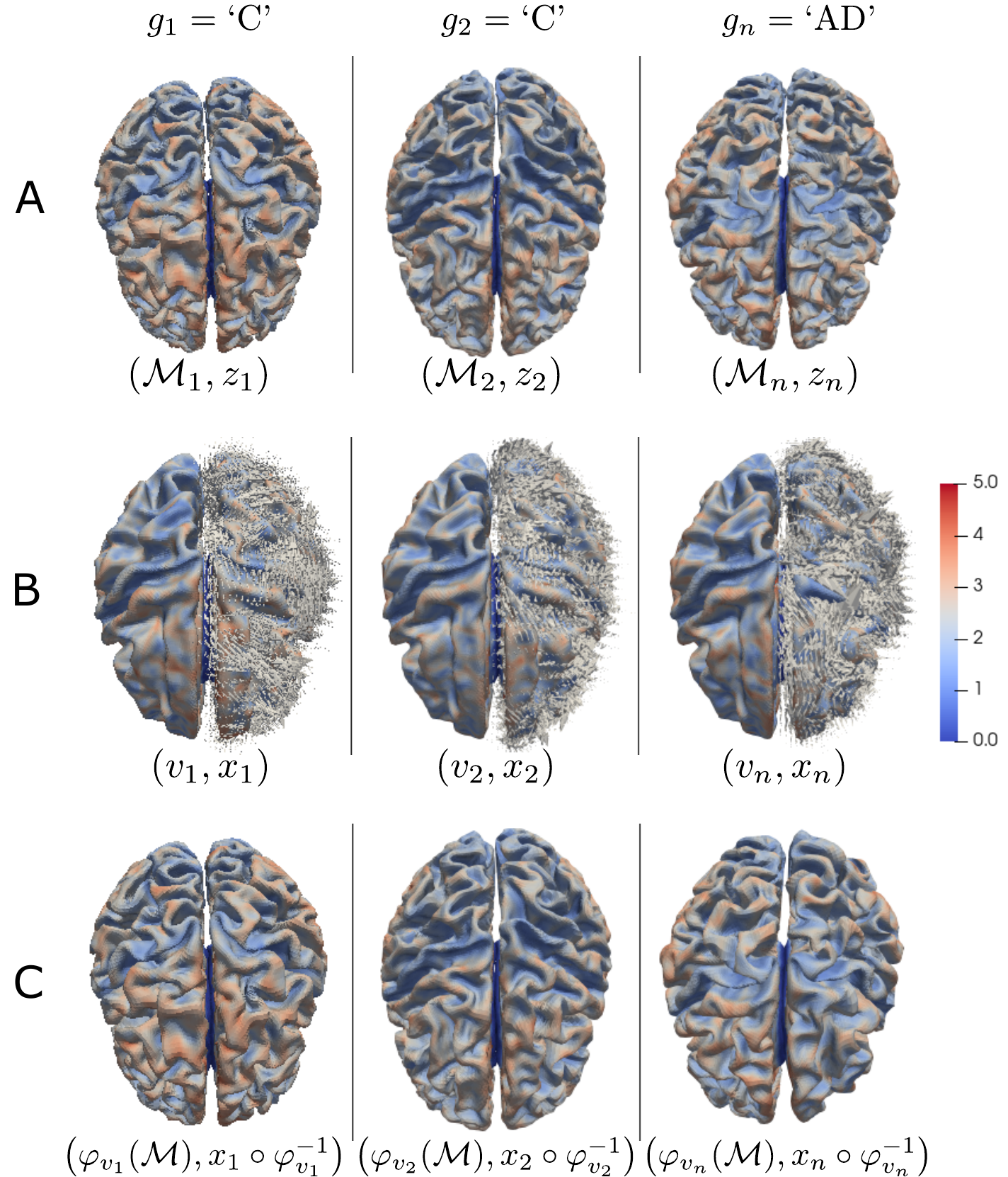}
\caption[]{Panel A: FoSs of three subjects in the training sample, where $g_i \in \{\text{`C'},\text{`AD'} \}$ denotes the disease state of the $i$th individual (C: Control, AD: Alzheimer's Disease), $\M_i$ is a two-dimensional manifold encoding the geometry of the cerebral cortex, and $z_i:\M_i \ra \R$ is a real function, supported on $\M_i$, describing cortical thickness (in mm). Panel B: Linear representation $\left(v_i, x_i \right)$ of each FoS $\left(\M_i, z_i \right)$ shown in Panel A. Here $v_i: \R^3 \ra \R^3$ is a vector-valued function encoding the geometry of the $i$th individual. This is depicted as a collection of 3D vectors $\{v_i(p_j)\}$ for a dense set of points $\{p_j	\} \subset \R^3$. For clarity, the function $v_i$ is displayed only on half of its domain $\R^3$. The function $x_i:\M \ra \R$ describes the spatially normalized thickness map of the $i$th individual on the fixed template $\M$. Panel C: FoS $(\varphi_{v_i} (\M) , x_i \circ \varphi^{-1}_{v_i})$ parametrized by the associated functions $\left(v_i, x_i \right)$ in Panel B. This is a close approximation of the FoS $\left(\M_i, z_i \right)$ in Panel A.

}
\label{fig:setting}
\end{figure}

Current approaches in the literature do not comprehensively address these challenges. Although various methods have been proposed to model functional data on multi-dimensional domains, they often focus on flat domains \parencite{goldsmith2014smooth, wang2017generalized, kang2018scalaronimage, feng2020bayesian, yu2021multivariate} or assume nonlinear but fixed domains \parencite{chung2015unified, chung2021statistical, lila2016smooth, mejia2020bayesian}. 

There has also been considerable work on the simpler setting of random surfaces that are \textit{not} coupled with functional data. These efforts can be broadly grouped into three main approaches. The first approach leverages global parametrizations to represent surfaces, employing either an $L^2$ metric \parencite{chung2008encoding, epifanio2014hippocampal,ferrando2020detecting} or a non-Euclidean metric \parencite{jermyn2012elastic, jermyn2017elastic, kurtek2015comprehensive, zhang2022lesa}. The second approach, which is more closely related to the one adopted in this work, uses diffeomorphic deformation functions of the surfaces' embedding space \parencite{vaillant2004statistics, younes2019shapes, arguillere2016diffeomorphic}, allowing for the inclusion of topological constraints. The third approach, prevalently used in neuroimaging studies, employs pre-specified or spectrum-based descriptors of shape \parencite{reuter2006laplacebeltrami, im2008brain, wachinger2015brainprint, hazlett2017early, wang2017holistic, dong2019applying, dong2020applying}. A critical drawback of the latter approach is the inability to uniquely map the discrete representations back to the original space of random surfaces.

The statistical analysis of random surfaces that are \textit{coupled} with functional data has not been extensively explored. One exception is the model in \textcite{zaetz2015novel} which focuses on annotated surfaces. In addition, unsupervised models have been investigated by \textcite{charlier2017fshape, lila2020statistical}. \textcite{lee2017atlasbased} have dealt with the classification problem by employing the fshape framework \parencite{charlier2017fshape} to represent the data and by using a linear discriminant model on the resulting finite-dimensional representations. In contrast, the statistical framework for the discriminant analysis of FoSs presented in this paper avoids any dimension reduction of the predictors and instead employs spatial penalties to regularize the discriminant direction. It tackles the non-Euclidean nature of the latent space of random domains by defining appropriate linear functional representations of FoSs, effectively reframing the problem of classifying FoSs as the problem of classifying bivariate functional data supported on general, but fixed, domains. A direct model for the estimation of a functional discriminant direction is then defined on the representation space where differential spatial regularizations are introduced to produce interpretable and well-defined estimates. A key feature of the proposed representation is its invertibility, which enables us to map estimates from the representation space back to the original space. This allows us to explore and interpret the estimated classification rule in the context of the original neurobiological objects of our motivating application.

The rest of the paper is organized as follows. In Section~\ref{sec:lin_repr}, we describe the representation model adopted to parametrize the non-Euclidean space of FoSs using linear function spaces. In Sections~\ref{sec:univariate} and \ref{sec:multivariate}, we develop a novel discriminant analysis model on the parametrizing linear function spaces and provide theoretical guarantees for the prediction performance of the proposed model. We introduce an efficient discretization approach in Section~\ref{sec:discretization} and apply the proposed model to the Alzheimer's Disease Neuroimaging Initiative (ADNI) and Parkinson's Progression Markers Initiative (PPMI) datasets in Section~\ref{sec:application}. Proofs and simulations are left to the appendices.

\section{Functional  data supported on general random domains}\label{sec:lin_repr}

The data considered in this work is a sample of triplets
\begin{equation}\label{eq:data}
\left\{(g_i, \M_i , z_i ), i = 1,\ldots,n\right\},
\end{equation}
where $g_i$ is a binary label, $\M_i \subset \R^3$ is a sample-specific closed two-dimensional manifold embedded in $\R^3$, and $z_i : \M_i \ra \R$ is a scalar function supported on the geometric object $\M_i$. We moreover assume that the points on the geometries $\{\M_i\}$ of the observed FoSs are in one-to-one correspondence across subjects. We refer to the pairs $\{(\M_i,z_i)\}$ as FoSs. 

In Figure~\ref{fig:setting}A, we display three observations of the training sample of our final application, where $g_i$ encodes the disease state of the subject, $\M_i$ encodes the geometry of the cerebral cortex, and $z_i$ encodes the cortical thickness map supported on $\M_i$. Our goal is to build a classifier from the given training sample that can predict the binary label $g^*$ of a previously unseen FoS $(\M^*,z^*)$.

\subsection{Linear functional representations}
In our motivating application, the geometric objects $\{\M_i: i=1,\ldots,n\}$ are topologically equivalent to a sphere, and therefore, they do not display holes or self-intersections. To inform our model of such physical non-Euclidean constraints, we define a convenient unconstrained representation model for the FoSs $\{(\M_i,z_i): i=1,\ldots,n\}$ in terms of objects belonging to linear function spaces.

Let $\M$ be a template two-dimensional manifold embedded in $\R^3$ that is topologically equivalent to a sphere.
We denote by $\calL^2(\M)$ the space of square integrable functions over $\M$, equipped with the standard inner product $\langle \cdot, \cdot \rangle_{\calL^2(\M)}$ and norm $\|\cdot\|_{\calL^2(\M)}$, and denote by $\calL^2(\R^3, \R^3)$ the space of square integrable vector-valued functions from $\R^3$ to $\R^3$, with inner product $\langle \cdot, \cdot \rangle_{\calL^2(\R^3, \R^3)}$ and norm $\| \cdot \|_{\calL^2(\R^3, \R^3)}$. Let $\calV(\R^3) \subset \calL^2(\R^3, \R^3)$ be a Reproducing Kernel Hilbert Space (RKHS) of smooth functions with compact support in $\R^3$. We then introduce a diffeomorphic operator $\varphi$ such that $\varphi_{v}: \R^3 \ra \R^3$ is a diffeomorphic function for every choice of $v \in \calV(\R^3)$ \parencite{younes2019shapes}. We denote by $\varphi_{v}(\M)$ the geometric object given by displacing every point $p \in \M \subset \R^3$ to the new location $\varphi_{v}(p)$. A direct consequence of $\varphi_{v}$ being diffeomorphic is that $\varphi_{v}(\M)$ is topologically equivalent to a sphere for every choice of $v \in \calV(\R^3)$. The construction of the diffeomorphic operator adopted in this paper and the computation of $v_i \in \calV(\R^3)$ such that $\varphi_{v_i}(\M) \approx \M_i$ are detailed in Appendix~\ref{sec:diff_op}. 

Next, we use the estimated $v_i$, and $z_i$, to define the spatially normalized function $x_i: \M \to \R$ as $x_i = z_i \circ \varphi_{v_i}$. For each point $p \in \M$, $x_i(p)$ is given by $z_i \circ \varphi_{v_i}(p)$, that is, the value of $z_i$ at the corresponding point $\varphi_{v_i}(p) \in \M_i$. We can then represent each FoS $(\M_i , z_i)$ with a pair of functions $(v_i, x_i)$ such that
\begin{equation}\label{eq:FoS_repr}
(\M_i , z_i) \approx (\varphi_{v_i} (\M) , x_i \circ \varphi^{-1}_{v_i}),
\end{equation}
where $v_i \in \calV(\R^3)$ and $x_i \in \calL^2(\M)$. 

We can depict the representation model introduced as follows:
\begin{equation}
(v_i, x_i) \longleftrightarrow (\M_i , z_i),
\end{equation}
meaning that given a FoS $(\M_i , z_i)$, we can compute a loss-less representation $(v_i, x_i)$ as described earlier, and vice-versa, given the representation $(v_i, x_i)$, we can compute the associated FoS through equation (\ref{eq:FoS_repr}). Hence, the pair of functions $(v_i, x_i)$ provides us with a linear representation of the original FoS $(\M_i , z_i)$ where every geometric object $\M_i$ is modeled as a (diffeomorphic) deformation of the template, i.e. $\varphi_{v_i} (\M)$, while the associated function $z_i$ is given by `transporting' the spatially normalized function $x_i$ onto $\M_i$ with such a deformation.

The approach described allows us to recast the original non-Euclidean problem of learning a classifier from the training sample $\{(g_i, \M_i, z_i) | i = 1,\ldots,n \}$ as the problem of learning a classifier from 
\begin{equation}\label{eq:repr}
\left\{(g_i, v_i, x_i)| i = 1,\ldots,n \right\},
\end{equation}
where $v_i \in \calV(\R^3) \subset \calL^2(\R^3, \R^3)$ and $x_i \in \calL^2(\M)$ are two functional predictors both belonging to \textit{linear} function spaces. In Figure~\ref{fig:setting}B, we show the functional linear representations associated with the three FoSs in Figure~\ref{fig:setting}A.

Crucially, the representation mapping employed here is `invertible', meaning that any pair of estimates $(\beta^G, \beta^F) \in \calV(\R^3) \times \calL^2(\M)$, such as the `direction' that optimally discriminates between two classes, can be mapped back to the original space of FoS using equation (\ref{eq:FoS_repr}). This mapping defines the associated trajectories of FoSs 
\begin{equation}\label{eq:FoS_traj}
\left\{ \left( \varphi_{c_1\beta^G}(\M), c_2 \beta^F \circ \varphi_{c_1\beta^G}^{-1} \right), c_1,c_2 \in \R \right\},
\end{equation}
where $\varphi_{c_1\beta^G}(\M)$ is guaranteed to be topologically equivalent to a sphere, thereby satisfying the physical constraints of the problem considered. 

In contrast to methods that require computing shape features, such as the spectrum of the Laplace-Beltrami operator \parencite{reuter2006laplacebeltrami,wachinger2015brainprint}, the approach adopted here provides us with interpretable discriminant directions in the space of the original neurobiological objects, as described by equation (\ref{eq:FoS_traj}). In addition, unlike approaches that work with global parametrizations of FoSs \parencite[see, e.g.,][]{chung2008encoding, epifanio2014hippocampal, zaetz2015novel, ferrando2020detecting}, the representation model used in this study is independent of the imaging data type, as long as we can specify how $\varphi_{v_i}$ deforms our objects and a suitable similarity measure. Therefore, the framework proposed in this work has the potential to accommodate additional types of data, such as streamlines generated from diffusion tensor images, where there may not be a one-to-one correspondence across subjects, but for which optimal transport similarity measures have been developed \parencite{feydy2017optimal}. Although our work assumes that the FoSs are in one-to-one correspondence, this is not strictly necessary. Assuming a one-to-one correspondence simplifies the definition of a similarity measure and makes it easier to compute these representations for complex objects such as cortical surfaces, as detailed in Appendix~\ref{sec:diff_op}. In Section~\ref{sec:application}, we compare the performance of our representation model with alternative models, in the context of our motivating application.

\subsection{Discriminant analysis on the parametrizing linear function spaces}
The aim of Sections~\ref{sec:univariate} and \ref{sec:multivariate} is to provide methodology for learning a linear classifier, from the training data $\{(g_i, v_i, x_i)| i = 1,\ldots,n\}$ displayed in Figure~\ref{fig:setting}B, by introducing a novel functional classification model that has the following crucial characteristics:
\begin{itemize}
\item Does not rely on Functional Principal Components Analysis (FPCA), or related dimension reduction methods, to reduce the dimension of the functional predictors, bypassing the intrinsic assumption that the discriminant direction is well represented by the space spanned by the first few unsupervised PC functions;

\item Can be applied to bivariate, and possibly multivariate, functional predictors each supported on a different domain;

\item Allows for explicit spatial regularization of the estimates on potentially nonlinear manifold domains, yielding well-defined and interpretable estimates;

\item Provides a direct approach to estimating the discriminant directions without relying on prior computation of the covariance structure, which is prohibitive in our application setting.
\end{itemize}

\section{Linear discriminant analysis over general domains}\label{sec:univariate}

We begin by focusing on the sub-problem of defining a classifier for the training data $\{(g_i,x_i)\}$, i.e., for the spatially normalized functional predictors supported on the fixed nonlinear manifold $\M$. In Section~\ref{sec:multivariate}, we then extend the proposed model to account for the geometric component $v_i$, in an additive fashion.

Assume the training set $\{(g_i,x_i)\}$ consists of $n$ independent copies of $(G,X)$, a pair of random variables with $X$ a zero-mean random function taking values in $\calL^2(\M)$ and $G$ a binary random variable such that $P\left(G = 1\right) = \pi_1$ and $P\left(G = 2\right) = \pi_2$. Let $\mu_1 = \E \left[X|G = 1 \right]$ and $\mu_2 = \E \left[X|G = 2 \right]$ denote the conditional means of $X$ and assume $\mu_1 \neq \mu_2$. Moreover, let $C(p,q) = \E\left[X(p)X(q)\right] , p, q \in \M$ denote the covariance function of $X$ and assume this is square integrable, i.e.,  $\int_\M \int_\M C(p, q)^2 \,dp \,dq < \infty$. For a real, symmetric, square-integrable, and non-negative function $R \in \calL^2(\M \times \M)$, let the integral operator $L_R:\calL^2(\M) \ra \calL^2(\M)$ be defined as
\[
L_{R}(\beta)(\cdot) = \int_\M R(p, \cdot )\beta(p) \,dp, \, \forall \beta \in \calL^2(\M).
\] 
Consequently, $L_C$ denotes the covariance operator of $X$, which is a compact self-adjoint operator and therefore admits the following spectral representation 
\begin{equation}
L_C(\beta) = \sum_{k=1}^\infty \theta_k \langle \beta, e_k \rangle_{\calL^2(\M)} e_k,
\end{equation}
in terms of the eigenvalues $\theta_1 \geq \theta_2 \geq \ldots \geq 0$ and associated eigenfunctions $e_1, e_2, \ldots \subset \calL^2(\M)$ of $L_C$.

Let $L_C^{-1}$ denote the linear inverse covariance operator, where $L_C^{-1}(e_k) = \frac{1}{\theta_k} e_k$ for all $k \geq 1$. Assume that $\|L_C^{-1}(\mu_2 - \mu_1)\|_{\calL^2(\M)} < \infty$ and define the population quantity $\beta^0 \in {\calL^2(\M)}$ such that
\[
L_C \beta^0 = \mu_2 - \mu_1.
\]
Note that this is an assumption on the underlying population quantities and will not have practical implications. However, it allows us to have a unique well-defined variable $\beta^0$ to study the convergence properties of the proposed model. For a discussion on the case $\|L_C^{-1}(\mu_2 - \mu_1)\|_{\calL^2(\M)} = \infty$, which is related to the perfect classification phenomenon, see \textcite{delaigle2012achieving, berrendero2018use, chen2018sensible, kraus2019classification}. 

The function $\beta^0$ can be understood as a functional analog of the multivariate discriminant vector of a linear discriminant analysis \parencite{shin2008extension}. For a new observation with predictor $x^* \in \calL^2(\M)$, it can be used to predict the associated label $g^*$ with the linear classification rule $\langle \beta^0, x^* \rangle_{\calL^2(\M)} > c^\text{th}$, with $c^{\text{th}}$ an appropriately chosen threshold. Moreover, if $X$ is a Gaussian random function within each group in $G$, it can be shown that the function $\beta^0$ defines the linear classifier that minimizes the misclassification error rate, and it is therefore optimal \parencite{delaigle2012achieving}. The discriminant direction $\beta^0$ can also be equivalently defined as the minimizer of the functional
\begin{equation}\label{eq:min_pop}
\frac{1}{2}\langle \beta, L_C \beta \rangle_{\calL^2(\M)} - \langle \mu_2 - \mu_1, \beta \rangle_{\calL^2(\M)}.
\end{equation}

In practice, the population quantities $C$, $\mu_1$, and $\mu_2$ are unknown and need to be estimated from the data. The goal of a classification model is therefore to recover $\beta^0$ from the training sample $\{(g_i,x_i): i=1,\ldots,n\}$ of $n$ independent copies of $(G,X)$. This can be achieved by using the sample covariance $L_{\hat{C}}$, with 
\[
\hat{C}(p,q) = \frac{1}{n}\sum_{i=1}^{n} x_i(p)x_i(q),\qquad p,q \in \M
\]
and the sample conditional means $\hat \mu_1$ and $\hat \mu_2$ to replace the population counterparts in equation (\ref{eq:min_pop}). An estimate $\hat \beta$ of $\beta^0$ can then be defined as a minimizer of 
\begin{equation}\label{eq:pen_cov_est}
\frac{1}{2}\langle \beta, L_{\hat{C}} \beta \rangle_{\calL^2(\M)} - \langle \hat \mu_2 - \hat \mu_1, \beta \rangle_{\calL^2(\M)} + \calP(\beta),
\end{equation}
where a penalty term $\calP(\beta)$ is typically added to overcome the ill-posedness of the minimization problem, which is due to the low-rank structure of $L_{\hat{C}}$. For instance, \textcite{park2021sparse} define a penalty $\calP$ that encourages the estimate $\hat \beta$ to be smooth and sparse, while \textcite{kraus2019classification} consider a ridge-type penalty. 

The functional discriminant model in equation (\ref{eq:pen_cov_est}) requires precomputing the empirical covariance function, which is generally not possible for dense functional data supported on multidimensional domains and is ultimately not feasible in our application setting. We therefore propose a direct regularized approach to estimating $\beta^0$. This will be possible thanks to the following simple observation. As noted for instance in \textcite{delaigle2012achieving}, the discriminant direction $\beta^0$ can be equivalently characterized as the solution to the minimization problem
\begin{equation}\label{eq:exp_ls}
\beta^0 = \argmin_{\beta}  \E\left[ Y - \langle X,\beta \rangle_{\calL^2(\M)} \right]^2,
\end{equation}
where $Y$ is an auxiliary scalar random variable such that $Y = -\frac{1}{\pi_1}$ if $G = 1$ and $Y = \frac{1}{\pi_2}$ otherwise. In other words, the classification problem considered can be reformulated as a functional regression problem. This motivates the adoption of a least-squares approach to estimating $\beta^0$, based on the empirical counterpart of the objective function in equation (\ref{eq:exp_ls}), where an additional differential regularization term is introduced to incorporate information on the geometric domain $\M$ and overcome the ill-posedness of the problem. For multivariate data, analogous least-squares formulations have also been adopted, for instance, in \textcite{hastie1994flexible,mai2012direct,gaynanova2020prediction}.

\subsection{Regularized estimation}
Given the training sample $(g_i, x_i)$, introduce a scalar variable $y_i$ such that $y_i = - \frac{n}{n_1}$ if $g_i = 1$ and $y_i = \frac{n}{n_2}$ otherwise, where $n_1$ and $n_2$ represent the sample sizes of class $1$ and $2$, respectively. Observe that $-\frac{n}{n_1}$ and $\frac{n}{n_2}$ are estimates of the values that can be taken by the random variable $Y$. Let $\calW^2(\M)$ be the Sobolev space of functions in $\calL^2(\M)$ with first and second distributional derivatives in $\calL^2(\M)$. We define an estimate $\hat \beta \in \calW^2(\M)$ of the population quantity $\beta^0$ as the solution to the following minimization problem
\begin{align}\label{eq:model_ls_univ}
\hat \beta = \argmin_{\beta \in \calW^2(\M)}  \frac{1}{n} \sum_{i=1}^n \left( y_i -  \langle x_i, \beta \rangle_{\calL^2(\M)} \right)^2 +  \lambda J(\beta), 
\end{align}
where the first term is a least-squares estimate of the objective function in equation (\ref{eq:exp_ls}) and the second term is a differential regularization term.  The parameter $\lambda$ controls the trade-off between the least-squares term of the objective function and the penalty term. Our choice of the regularization term is
\begin{equation}
J(\beta) = \|\Delta_{\M} \beta\|^2_{\calL^2(\M)} + \epsilon \|\beta\|^2_{\calL^2(\M)},
\end{equation}
with $\epsilon \geq 0$. This is a linear combination of two terms. The first one is based on the Laplace-Beltrami operator $\Delta_{\M}: \calW^2 \subset \calL^2(\M)\ra \calL^2(\M)$ and quantifies the smoothness of the function $\beta: \M \ra \R$ on the nonlinear manifold domain $\M$. Specifically, it allows the model estimate $\hat \beta$, at any fixed point $p \in \M$, to borrow strength from the other points on $\M$ while constraining the `information' to propagate coherently with the nonlinear manifold structure of the anatomical object $\M$. The second term is a generic shrinkage-type regularization.

It is worth noting that the function space $\calL^2(\M)$ is linear, even though each function $f \in \calL^2(\M)$ is supported on a nonlinear domain. For Euclidean domains, it is common to define a smooth subspace of $\calL^2(\M)$ by forming an RKHS from a positive-definite kernel. However, constructing a positive-definite kernel that is compatible with the geometry of a manifold is a non-trivial task \parencite{feragen2015geodesic, jayasumana2015kernel}. To overcome this challenge, we constructively define a Sobolev norm $J^{1/2}(\cdot)$ and an associated Sobolev space $\calW^2(\M)$ by leveraging a local differential operator, namely the Laplace-Beltrami operator. The discrete counterpart of this local operator is a sparse matrix, reducing our problem to sparse linear algebra and enabling us to solve equation (\ref{eq:model_ls_univ}) for the large data of our final application. We provide more details about the relationship of our approach with the RKHS approach in Section~\ref{sec:ref_pen}.

\subsection{Theory}
The aim of this section is to provide theoretical guarantees for the performance of the proposed model. Specifically, we provide a probability bound for the out-of-sample risk, i.e., the random variable
\begin{equation} \label{eq:oos_pred}
\E^* \left[ \langle X^*, \beta^0 - \hat \beta \rangle_{\calL^2(\M)} \right] ^2,
\end{equation}
where $X^*$ is a copy of $X$ that is independent of the training data and $\E^*$ is the expectation taken over $X^*$. Equation (\ref{eq:oos_pred}) measures the discrepancy between the prediction made with the estimated parameter $\hat \beta$ and the `optimal' prediction made with the unknown population quantity $\beta^0$.

Assume for simplicity that $\epsilon > 0$. Then, thanks to the Sobolev embedding theorem \parencite{brezis2011functional}, $\exists M \geq 0$ such that for any $p \in \M$
\[
f(p) \leq \sup_q |f(q)| \leq  M \left( \|\Delta_{\M} f\|^2_{\calL^2(\M)} + \epsilon \|f\|^2_{\calL^2(\M)} \right)^{1/2}, \qquad \forall f \in \calW^2(\M),
\]
that is, the evaluation operator is a continuous functional. A direct consequence is that the space $ \calW^2(\M)$ equipped with the norm $J^{1/2}(\cdot) = \left( \|\Delta_{\M} \cdot\|^2_{\calL^2(\M)} + \epsilon \|\cdot\|^2_{\calL^2(\M)} \right)^{1/2}$ is an RKHS with a symmetric, positive definite kernel function $K_\M: \M \times \M \ra \R$ \parencite{berlinet2004reproducing}. The kernel function $K_\M$ is used only for theoretical investigation here and obtaining its explicit form is, in general, not computationally feasible and not necessary. We will, however, take advantage of the fact that $L_{K_{\M}}^{1/2}(\calL^2(\M)) = \calW^2(\M)$, where $L_{K_{\M}}^{1/2}$ denotes the square root of $L_{K_{\M}}$ \parencite{cucker2002mathematical}. For $\epsilon = 0$, the functional $J^{1/2}$ defines a semi-norm rather than a norm and similar arguments hold by restricting ourselves to the subspace of $\calL^2(\M)$ that is orthogonal to the null space of $J^{1/2}$.

Next, we define the sandwich operator $T = L_{K_{\M}}^{1/2} L_{C} L_{K_{\M}}^{1/2}$ \parencite{cai2012minimax} and make the following assumptions.

\begin{assumption}\label{assm:bounded}
The constant $\kappa^2$, defined as $\kappa^2 = \esssup \|L^{1/2}_{K_\M} X\|^2_{\calL^2(\M)}$ is finite.
\end{assumption}

\begin{assumption}\label{assm:smooth}
There exists a smooth function $\beta^0 \in \calW^2(\M)$ such that $\beta^0 = L_C^{-1}(\mu_2 - \mu_1)$.
\end{assumption}

\begin{assumption} \label{assm:effective}
The effective dimension of $T$ satisfies $D(\lambda) = \operatorname{Tr}((T + \lambda I)^{-1}T) \leq c \lambda^{-\theta}$ for constants $c,\theta > 0$. Here $\operatorname{Tr}$ denotes the trace operator.
\end{assumption}

Assumption~\ref{assm:bounded} allows us to use a Hoeffding-type inequality for Hilbert space valued random elements and has no practical implications. This condition is met, for example, when $\|X\|_{\calL^2(\M)}$ is bounded. However, it is more general given that $L^{1/2}_{K_\M} X$ represents a smoothed version of $X$. Assumption~\ref{assm:smooth} guarantees that the population quantity $\beta^0$ is well-defined and belongs to the space of smooth functions $\calW^2(\M)$. Assumption \ref{assm:effective} is expressed in terms of properties of the effective dimension $D(\cdot)$. For our final choice of $\lambda$, it is straightforward to check that this assumption holds by assuming that the eigenvalues $\{ \tau_k \}$ of $T$ decay as $\tau_k \asymp k^{-2r}$, with $r > \frac{1}{2}$. This is a typical assumption in the literature on functional linear models \parencite{cai2012minimax} and is related to the rate of decay of the eigenvalues of $L_{K_\M}$ and $L_C$, and their alignment.

The following theorem provides an upper bound for the out-of-sample risk.
\begin{theorem}\label{thm:risk_manifold}
Under Assumptions~\ref{assm:bounded}-\ref{assm:effective}, if $\lambda \asymp n^{-\frac{1}{1+\theta}}$, the estimator $\hat \beta$ in equation (\ref{eq:model_ls_univ}) is such that
\begin{align}
\E^* \left[ \langle X^*, \beta^0 - \hat \beta \rangle_{\calL^2(\M)} \right] ^2 = \calO_p \left( n^{-\frac{1}{1+\theta}} \right).
\end{align}
\end{theorem}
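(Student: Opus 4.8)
The plan is to reduce the penalized least-squares problem (\ref{eq:model_ls_univ}) to ridge regression in $\calL^2(\M)$ via the RKHS isometry, and then carry out the classical bias--variance analysis for Tikhonov-regularized inverse problems, reading the rate off the effective dimension $D(\lambda)$ via Assumption~\ref{assm:effective}.

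\textbf{Step 1: reduction to ridge regression.} Since $L_{K_\M}^{1/2}$ is a bijection from $\calL^2(\M)$ onto $\calW^2(\M)$ with $J(L_{K_\M}^{1/2}f)=\|f\|_{\calL^2(\M)}^2$, I would substitute $\beta=L_{K_\M}^{1/2}f$ and set $\tilde x_i:=L_{K_\M}^{1/2}x_i$; then (\ref{eq:model_ls_univ}) becomes the ridge problem $\hat f=\argmin_{f\in\calL^2(\M)}\frac1n\sum_i(y_i-\langle\tilde x_i,f\rangle)^2+\lambda\|f\|^2$, with closed form $\hat f=(\hat T+\lambda\Id)^{-1}\hat g$, where $\hat T:=\frac1n\sum_i\tilde x_i\otimes\tilde x_i$, $\hat g:=\frac1n\sum_i y_i\tilde x_i$, $\hat\beta=L_{K_\M}^{1/2}\hat f$, and $\E\hat T=L_{K_\M}^{1/2}L_CL_{K_\M}^{1/2}=T$. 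Writing $f^0$ for the element with $\beta^0=L_{K_\M}^{1/2}f^0$ (well defined by Assumption~\ref{assm:smooth}), the quantity of interest becomes, conditionally on the training sample,
\[
\E^*\langle X^*,\beta^0-\hat\beta\rangle_{\calL^2(\M)}^2=\langle\beta^0-\hat\beta,L_C(\beta^0-\hat\beta)\rangle=\langle f^0-\hat f,\,T(f^0-\hat f)\rangle=\|T^{1/2}(f^0-\hat f)\|^2 .
\]
Setting $\xi:=Y-\langle X,\beta^0\rangle$, the characterization (\ref{eq:exp_ls}) yields the normal equation $\E[\xi X]=0$, hence $\E[\xi\tilde x]=0$, while Assumption~\ref{assm:bounded} together with $|Y|\le\max(1/\pi_1,1/\pi_2)$ makes $\xi$ a.s.\ bounded. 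Using $y_i=\langle\tilde x_i,f^0\rangle+\xi_i$ I obtain
\[
\hat f-f^0=-\lambda(\hat T+\lambda\Id)^{-1}f^0+(\hat T+\lambda\Id)^{-1}\hat\Xi,\qquad\hat\Xi:=\frac1n\sum_{i=1}^n\xi_i\tilde x_i ,
\]
the replacement of the true weights $\pm1/\pi_j$ by $\pm n/n_j$ perturbing $\hat g$ by $\calO_p(n^{-1/2})\cdot\frac1n\sum_i\tilde x_i$ and contributing $\calO_p(n^{-1})$ to the risk, which I would absorb at the end.

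\textbf{Step 2: operator equivalence --- the main obstacle.} The crux is to control $\hat T$ by $T$ at resolution $\lambda$: on an event $\calA_n$ of probability tending to one one should have $\|(T+\lambda\Id)^{-1/2}(\hat T-T)(T+\lambda\Id)^{-1/2}\|_{\op}\le\tfrac12$, hence $\tfrac12(T+\lambda\Id)\preceq\hat T+\lambda\Id\preceq\tfrac32(T+\lambda\Id)$, and in particular $\|(T+\lambda\Id)^{1/2}(\hat T+\lambda\Id)^{-1}(T+\lambda\Id)^{1/2}\|_{\op}\le2$. The i.i.d.\ summands $(T+\lambda\Id)^{-1/2}(\tilde x_i\otimes\tilde x_i-T)(T+\lambda\Id)^{-1/2}$ are self-adjoint, a.s.\ bounded in operator norm by $\|(T+\lambda\Id)^{-1/2}\tilde x\|^2\le M_2^2\|L_{K_\M}\|_{\op}/\lambda$, with second moment (in operator order) dominated by a multiple of $(T+\lambda\Id)^{-1}T$, whose trace is $D(\lambda)$. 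A plain second-moment/Chebyshev bound is only borderline at the critical scaling $\lambda\asymp n^{-1/(1+\theta)}$; this is exactly why one needs the exponential Hoeffding/Bernstein-type inequality for Hilbert-space-valued random elements (Theorem~\ref{thm:hoeff}), whose applicability is what Assumption~\ref{assm:bounded} licenses. The resulting requirement --- roughly $n\lambda/\log n\to\infty$ --- is met since $n\lambda\asymp n^{\theta/(1+\theta)}$ grows polynomially. Getting the exponent bookkeeping to close at the critical $\lambda$ is the delicate part of the whole argument.

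\textbf{Step 3: bias and variance bounds, and conclusion.} On $\calA_n$, I would write the bias term as $T^{1/2}\lambda(\hat T+\lambda\Id)^{-1}f^0=\bigl[T^{1/2}(T+\lambda\Id)^{-1/2}\bigr]\bigl[(T+\lambda\Id)^{1/2}(\hat T+\lambda\Id)^{-1}(T+\lambda\Id)^{1/2}\bigr]\bigl[\lambda(T+\lambda\Id)^{-1/2}f^0\bigr]$, whose three factors have norms at most $1$, $2$, and $\lambda^{1/2}\|f^0\|$, so the bias contributes $\calO(\lambda)$. For the stochastic term the same sandwiching reduces it to $\|(T+\lambda\Id)^{-1/2}\hat\Xi\|^2$, and since $\E[\xi\tilde x]=0$ with $\xi$ bounded,
\[
\E\|(T+\lambda\Id)^{-1/2}\hat\Xi\|^2=\frac1n\E\!\left[\xi^2\langle\tilde x,(T+\lambda\Id)^{-1}\tilde x\rangle\right]\le\frac{C}{n}\operatorname{Tr}\!\bigl((T+\lambda\Id)^{-1}T\bigr)=\frac{C}{n}D(\lambda)\le\frac{Cc}{n}\lambda^{-\theta}
\]
by Assumption~\ref{assm:effective}, so Markov's inequality gives $\calO_p(\lambda^{-\theta}/n)$. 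Combining the two bounds, and since $\calA_n^c$ has vanishing probability,
\[
\E^*\langle X^*,\beta^0-\hat\beta\rangle_{\calL^2(\M)}^2=\calO_p\!\left(\lambda+\frac{\lambda^{-\theta}}{n}\right),
\]
and the choice $\lambda\asymp n^{-1/(1+\theta)}$ balances the two terms at $n^{-1/(1+\theta)}$, which is the claimed rate. The case $\epsilon=0$ follows verbatim after restricting to the orthogonal complement of the finite-dimensional null space of $J^{1/2}$, as indicated before the statement; the lower-order $\calO_p(n^{-1})$ effect of the plug-in weights is then negligible.
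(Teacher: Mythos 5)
Your overall strategy is sound and shares the paper's key ingredients: the same RKHS reduction $\beta = L_{K_\M}^{1/2}f$, the same identification of the out-of-sample risk with $\|T^{1/2}(\hat f - f^0)\|^2$, the sandwich-operator equivalence between $\hat T+\lambda I$ and $T+\lambda I$ via a Hilbert-space concentration inequality, and the effective-dimension bound $D(\lambda)/n$ for the stochastic term, balanced against an $\calO(\lambda)$ bias. Where you genuinely differ is the error decomposition. You write $y_i=\langle\tilde x_i,f^0\rangle+\xi_i$ and split the error into the regularization bias $-\lambda(\hat T+\lambda I)^{-1}f^0$ plus a noise term driven by $\hat\Xi=\tfrac1n\sum_i\xi_i\tilde x_i$, using only $\E[\xi X]=0$ and the a.s.\ boundedness of $\xi$ --- which is indeed what rescues the argument from the dependence between $\xi$ and $X$ that the paper flags as the obstacle to citing standard functional-linear-regression results. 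The paper never introduces the residual: it exploits the identity $\tfrac1n\sum_i y_ix_i=\hat\mu_2-\hat\mu_1$, which is exactly unbiased for $d=\mu_2-\mu_1$, splits the error into a term driven by $\hat d-d$ (whose covariance it computes explicitly) and a term $(T_n+\lambda I)^{-1}L_{K_\M}^{1/2}d-T^{-1}L_{K_\M}^{1/2}d$ bundling the regularization bias with the operator fluctuation $T-T_n$. Both decompositions are exact and both close at the rate $\lambda+\lambda^{-\theta}/n$.

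The one genuine gap is your treatment of the empirical weights. You define $\xi_i$ with the population weights $\pm1/\pi_j$ and assert that replacing them by $\pm n/n_j$ perturbs the risk by $\calO_p(n^{-1})$. As written this does not follow: the perturbation to $\hat g$ is $\Delta=\tfrac1n\sum_i(y_i-Y_i)\tilde x_i$ with $\|\Delta\|=\calO_p(n^{-1/2})$, and the generic bound $\|T^{1/2}(\hat T+\lambda I)^{-1}\|_{\op}\lesssim\lambda^{-1/2}$ on your event $\calA_n$ only yields a contribution $\calO_p(n^{-1}\lambda^{-1})=\calO_p(n^{-\theta/(1+\theta)})$ to the risk, which \emph{exceeds} the target rate whenever $\theta<1$. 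The claim is repairable but requires an extra argument: since $X$ is zero-mean, $\pi_1\mu_1+\pi_2\mu_2=0$, so $\mu_1$ and $\mu_2$ are both scalar multiples of $d$, hence $L_{K_\M}^{1/2}\mu_j=c_jTf^0\in\operatorname{range}(T)$ by Assumption~\ref{assm:smooth} and $\|(T+\lambda I)^{-1/2}L_{K_\M}^{1/2}\mu_j\|\le\|T\|^{1/2}|c_j|\,\|f^0\|$ uniformly in $\lambda$; the leading part of $\Delta$ lies in the span of these two directions, which brings its contribution down to $\calO_p(n^{-1})$, with the remaining fluctuation of order $\calO_p(n^{-2}\lambda^{-1})=o(n^{-1/(1+\theta)})$. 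The paper's choice of working directly with $\hat d=\hat\mu_2-\hat\mu_1$ sidesteps this issue entirely. Everything else in your sketch --- the borderline sandwich bound at the critical $\lambda$, the $\calO(\lambda)$ bias, the $D(\lambda)/n$ variance --- matches the paper's proof.
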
%
Similar rates of convergence have been shown to hold for regularized estimates in the functional linear regression setting \parencite[see, e.g., ][]{cai2012minimax,tong2018analysis,sun2018optimal,reimherr2018optimal}. However, a key difference in our model is that the residual random variable $\epsilon = Y - \langle X, \beta^0 \rangle_{\calL^2(\M)}$ and the functional predictor $X$ are not independent, which prevents the direct application of such results. Therefore, Theorem \ref{thm:risk_manifold} shows that in spite of such a dependence structure we are nevertheless able to recover the functional linear model rates of convergence. The proof is provided in Appendix~\ref{sec:proofs}.

\subsection{Nonlinear extensions}\label{sec:non_linear}
To incorporate nonlinearity into the model described in equation (\ref{eq:model_ls_univ}), one can substitute the term $\langle x_i, \beta \rangle_{\calL^2(\M)}$ with a nonlinear function of the data, such as a polynomial term or a single-index model, as done in the context of functional regression models in \textcite{yao2010functional,jiang2011functional}, respectively. However, these extensions come at the cost of estimating additional functional parameters or optimizing a more complex objective function.

If the covariance structures of the two classes are believed to be different, the proposed functional linear discriminant model can be generalized to an approximate functional quadratic discriminant model, following the approach proposed by \textcite{gaynanova2019sparse}, as follows. We estimate the discriminant rule by minimizing the following objective function with respect to $\beta_1,\beta_2 \in \calL^2(\M)$:
\[
\frac{1}{n_1} \sum_{i|g_i = 1} \left( 1 -  \langle x_i, \beta_1 \rangle_{\calL^2(\M)} \right)^2 + \frac{1}{n_2} \sum_{i|g_i = 2} \left( 1 +  \langle x_i, \beta_2 \rangle_{\calL^2(\M)} \right)^2 + \lambda_1 J(\beta_1) + \lambda_2 J(\beta_2),
\]
where $\lambda_1, \lambda_2 > 0$ are tuning parameters. A modified version of Fisher's criterion \parencite{gaynanova2019sparse} is then employed to assign the class by first projecting the data along the estimated directions.

As expected, the simulations presented in Appendix~\ref{sec:simulations} demonstrate that the approximate functional quadratic discriminant model outperforms the functional linear discriminant model when the covariance structures of the two classes differ. Examining the theoretical properties of this extension is beyond the scope of this paper and is left to future work.

\section{Additive multivariate generalizations}\label{sec:multivariate}
We now consider a bivariate extension of the functional model introduced in Section~\ref{sec:univariate}, which incorporates the geometric component of the original data. We therefore consider the training sample $\{(g_i, v_i, x_i)\}$, where $v_i \in \calV(\R^3)$ is a vector field representing the subject-specific geometry of the $i$th subject. Recall that $\left( \calV(\R^3), \| \cdot \|_{\calV(\R^3)} \right)$ is an RKHS of smooth functions with compact support in $\R^3$. Moreover, denote by $K_{\R^3}$ its reproducing kernel.

Suppose the training set $\{(g_i, v_i, x_i)\}$ consists of $n$ independent copies of $(G,V,X)$, a triplet of random variables with $V$ a zero-mean random function taking values in $\calV(\R^3)$, $X$ a zero-mean random function taking values in $\calL^2(\M)$, and $G$ a binary random variable such that $P\left(G = 1\right) = \pi_1$ and $P\left(G = 2\right) = \pi_2$. 

We now adopt the multivariate functional data notation from \textcite{happ2018multivariate}, and define $\vect{X}(p) = \left(V(p_1), X(p_2)\right)$, with $p = (p_1,p_2) \in D = D_1 \times D_2 = \R^3 \times \M$. The multivariate random function $\vect{X}$ takes values in a Hilbert space $\calH = \calL^2(\R^3,\R^3) \times \calL^2(\M)$ with inner product 
$\langle \vect f, \vect g \rangle_{\calH} = \langle f^{(1)}, g^{(1)} \rangle_{\calL^2(\R^3,\R^3)} + \langle f^{(2)}, g^{(2)} \rangle_{\calL^2(\M)}$ for $f, g \in \calH$. Here $f^{(j)}$, with $j \in \{1,2\}$, denotes the $j$th functional component of the multivariate function $\vect f$. For $p, q \in D$, define the matrix of covariances $\vect C(p, q) = \E(\vect{X}(p) \otimes \vect{X}(q))$ with elements $C_{lj}(p_l, q_j) = \E [ X^{(l)}(p_l) X^{(j)}(q_j)]$ where $p_l \in D_l, q_j \in D_j$, $l \in \{1,2\}$, and $j \in \{1,2\}$. Denote the conditional means of $\vect X$ by $\vect \mu_1 = \left(\mu_1^{(1)},\mu_1^{(2)}\right) : = \left(\E \left[V|G = 1 \right], \E \left[X|G = 1 \right] \right)$ and $\vect \mu_2 = \left(\mu_2^{(1)},\mu_2^{(2)} \right) : = \left(\E \left[V|G = 2 \right],\E \left[X|G = 2 \right] \right)$, and assume $\vect \mu_1 \neq \vect \mu_2$. The covariance operator $L_{\vect C}: \calH \ra \calH$ is such that the $j$th component of $L_{\vect C} \vect f$, for any $\vect f \in \calH$, is given by
\begin{equation}
(L_{\vect C} \vect f)^{(j)}(p_j) = \sum_{i=1}^2 \int_{D_i} C_{ij}(q_i,p_j) f^{(i)}(q_i) \, dq_i.
\end{equation}
Similar to the univariate case, we assume that the population quantity $\vect \beta^0 \in \calH$ is well-defined and satisfies the equation
\[
L_{\vect C} \vect \beta^0 = \vect \mu_2 - \vect \mu_1.
\]
This can be viewed as a multivariate generalization of the linear discriminant direction defined in the previous section. We now turn to the problem of defining an estimator for $\vect \beta^0$.

\subsection{Regularized estimation}\label{sec:multivariate-est}
Let the variable $y_i$ be such that $y_i = - \frac{n}{n_1}$ if $g_i = 1$ and $y_i = \frac{n}{n_2}$ otherwise. We define the multivariate functional estimate $\vect{\hat \beta} = \left( \hat \beta^{G}, \hat \beta^{F}\right)$ of the population quantity $\vect{\beta^0}$ to be the solution to the following minimization problem
\begin{equation}\label{eq:model_ls_biv}
\left( \hat \beta^{G}, \hat \beta^{F}\right) =     \argmin_{\substack{\beta^G \in \calV(\R^3)\\ \beta^F \in \calW^2(\M)}}  \frac{1}{n} \sum_{i=1}^n \left( y_i - \langle v_i, \beta^G \rangle_{\calL^2(\R^3,\R^3)} -  \langle x_i, \beta^F \rangle_{\calL^2(\M)} \right)^2 + \lambda_1 \|\beta^G\|_{\calV(\R^3)}^2 + \lambda_2 J(\beta^F),
\end{equation}
with $\lambda_1,\lambda_2$ tuning parameters. 

Equation (\ref{eq:model_ls_biv}) extends the model proposed in Section~\ref{sec:univariate} to account for both the geometric functional descriptor $v_i$ and the function $x_i$ in an additive fashion.  The regularization terms in the equation enforce smoothness on the functional estimates $\hat \beta^{G}$ and $\hat \beta^{F}$ in their respective function spaces.

\subsection{Differential regularization and kernel penalty: a unified modeling perspective}\label{sec:ref_pen}

In Section~\ref{sec:multivariate-est}, we have adopted two different approaches to produce smooth estimates $\hat \beta^G:\R^3 \ra \R^3$ and $\hat \beta^F:\M \ra \R$. The smoothness of $\hat \beta^F$ is enforced by means of a penalty $J(\cdot)$ defined in terms of a Sobolev norm, which implicitly defines a kernel $K_\M$. On the other hand, the smoothness of $\hat \beta^G$ is enforced by means of a norm $\| \cdot \|_{\calV(\R^3)}$, defined implicitly through the direct definition of a kernel $K_{\R^3}$. For clarity, we summarize the relevant function spaces and associated norms and kernels in Table~\ref{table:spaces}.

\begin{table}
\begin{center}
\begin{tabular}{ |c|c|c|c| }
 \hline
 Estimate & Function space & Norm & Kernel \\ 
  \hline
  $\hat \beta^F:\M \ra \R$ & $\calW^2(\M)$ & $J^{1/2}(\cdot) = \left( \|\Delta_{\M} \cdot\|^2_{\calL^2(\M)} + \epsilon \|\cdot\|^2_{\calL^2(\M)} \right)^{1/2}$ & $K_\M$ (implicit)\\ 
 $\hat \beta^G:\R^3 \ra \R^3$  & $\calV(\R^3)$ & $\| \cdot \|_{\calV(\R^3)}$ (implicit) & $K_{\R^3}$\\ 
 \hline
\end{tabular}
\caption{Table summarizing estimates and associated function spaces, norms and kernels.}\label{table:spaces}
\end{center}
\end{table}

From a methodological perspective, the reproducing kernel $K_{\R^3}(p,q)$ can be understood as a measure of the influence of the function value at $p \in \R^3$ on the function value at $q \in \R^3$ and vice-versa. Defining a smooth function space through a kernel has arguably an advantage when it comes to discretizing an infinite-dimensional minimization problem over that function space. In fact, thanks to the well-known representer theorem \parencite{wahba1990spline, yuan2010reproducing}, under mild conditions, its \textit{exact} solution can be expressed as a linear combination of the elements of a $n$-dimensional basis, which involves the explicit expression of the kernel. 

Hence, it is natural to wonder whether a similar approach could be adopted for $\hat \beta^F:\M \ra \R$. In other words, can we define a smooth real function space on $\M$ by explicitly defining a kernel $K_\M: \M \times \M \ra \R$ encoding a measure of influence that is coherent with the nonlinear geometry of $\M$? This, however, is a challenging task due to the positive-definiteness property that $K_\M$ must satisfy. Consider, for instance, the popular exponential kernel. Its natural extension to the manifold setting is $K_\M(p,q) = \exp(-c \, \text{d}_{\M}(p,q)^2)$, where $\text{d}_{\M}(p,q)$ is the geodesic distance between $p \in \M$ and $q \in \M$. Unfortunately, this kernel cannot be guaranteed to be positive definite for a general nonlinear manifold $\M$ \parencite{feragen2015geodesic,jayasumana2015kernel}. 

Alternatively, we could try to compute an explicit form of the kernel $K_\M$ from $J^{1/2}(\cdot)$ by employing the following identity \parencite{wahba1990spline, fasshauer2013reproducing}
\begin{equation}\label{eq:RKHS_eval}
f(p) = \langle K_\M(p, \cdot), f \rangle_{\calW^2(\M)}, \qquad \forall p \in \M, f \in \calW^2(\M),
\end{equation}
where $\langle \cdot, \cdot \rangle_{\calW^2(\M)}$ is the inner product that induces the norm $J^{1/2}(\cdot)$.
However, closed-form solutions to equation (\ref{eq:RKHS_eval}) are not available in our setting. Approximate solutions could be computed by Finite Elements Analysis (FEA) \parencite{quarteroni2009numerical}, but we would still face the challenge of storing the dense object $K_\M(\cdot,\cdot)$. As described in Section~\ref{sec:discretization}, we instead leverage FEA to directly discretize the function $\hat \beta^F:\M \ra \R$ in equation (\ref{eq:model_ls_biv}).

This highlights that the choice of the two modeling approaches is not arbitrary and that, arguably, for functional estimates supported on Euclidean spaces, defining explicitly a reproducing kernel is likely the preferred choice. Meanwhile, for general non-Euclidean domains, where defining a reproducing kernel is not trivial, the differential penalization approach is preferable.

\subsection{Theory}

Define the diagonal matrix of reproducing kernels $\vect K(p, q)$ with entries $\vect K_{11}(p_1, q_1) = K_{\R^3}(p_1, q_1)$ and $\vect K_{22}(p_2, q_2) = K_{\M}(p_2, q_2)$; $p_i \in D_i, p_j \in D_j$. Let $L_{\vect K}: \calH \ra \calH$ be the associated integral operator and, analogously to the univariate functional setting, define the sandwich operator $T = L^{1/2}_{\vect K} L_{\vect C} L^{1/2}_{\vect K}$. We make the following assumptions, which are analogous to Assumptions~\ref{assm:bounded}-\ref{assm:effective}.

\begin{assumption}\label{assm:bounded-mult}
The constant $\kappa^2_2$, defined as $\kappa^2_2 = \esssup \|L^{1/2}_{\vect K} \vect X\|^2_{\calH}$ is finite.
\end{assumption}

\begin{assumption}\label{assm:smooth-mult}
There exists a smooth function $\vect \beta^0 \in \calV(\R^3) \times \calW^2(\M)$ such that  $\vect \beta^0 = L_{\vect C}^{-1}(\vect \mu_2 - \vect \mu_1)$.
\end{assumption}

\begin{assumption}\label{assm:effective-mult}
The penalty coefficient $\lambda := \lambda_1 = \lambda_2 $  and the effective dimension of $T$ satisfy $D(\lambda) = \operatorname{Tr}((T + \lambda I)^{-1}T) \leq c \lambda^{-\theta}$ for constants $c,\theta > 0$.
\end{assumption}

The following theorem, which is an extension of Theorem \ref{thm:risk_manifold}, provides an upper bound for the out-of-sample risk.
\begin{theorem}\label{thm:risk_biv}
Under Assumptions~\ref{assm:bounded-mult}-\ref{assm:effective-mult}, if $\lambda \asymp n^{-\frac{1}{1+\theta}}$, the estimator $\vect{\hat \beta} = \left(\hat \beta^G, \hat \beta^F \right)$ in equation (\ref{eq:model_ls_biv}) is such that
\begin{align}
\E^{*} \left[ \langle \vect X^*, \vect \beta^0 - \vect{\hat \beta} \rangle_{\calH} \right] ^2 = \calO_p \left( n^{-\frac{1}{1+\theta}} \right),
\end{align}
where $\vect X^*$ is a copy of $\vect X$ that is independent of the training data and $\E^*$ is the expectation taken over $\vect X^*$.
\end{theorem}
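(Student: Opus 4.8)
The plan is to reduce Theorem~\ref{thm:risk_biv} to the machinery already developed for Theorem~\ref{thm:risk_manifold}, the only genuinely new ingredient being product-space bookkeeping. First I would observe that, when $\lambda_1 = \lambda_2 = \lambda$, the penalty $\lambda_1\|\beta^G\|_{\calV(\R^3)}^2 + \lambda_2 J(\beta^F)$ in (\ref{eq:model_ls_biv}) is exactly $\lambda$ times the squared norm of the product RKHS $\calV(\R^3)\times\calW^2(\M)$, which is itself an RKHS over $D = D_1\times D_2$ with the block-diagonal kernel $\vect K = \operatorname{diag}(K_{\R^3},K_\M)$ and satisfies $L_{\vect K}^{1/2}(\calH) = \calV(\R^3)\times\calW^2(\M)$ with matching norms. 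Writing $\vect\beta = L_{\vect K}^{1/2}\vect g$ with $\vect g\in\calH$ and setting $\widetilde{\vect Y} = L_{\vect K}^{1/2}\vect Y$, $\widetilde{\vect y}_i = L_{\vect K}^{1/2}\vect y_i$ (so that $\langle\vect y_i,\vect\beta\rangle_\calH = \langle\widetilde{\vect y}_i,\vect g\rangle_\calH$), problem (\ref{eq:model_ls_biv}) becomes the ridge regression
\[
\hat{\vect g} = \argmin_{\vect g\in\calH}\ \frac1n\sum_{i=1}^n\bigl(y_i-\langle\widetilde{\vect y}_i,\vect g\rangle_\calH\bigr)^2+\lambda\|\vect g\|_\calH^2,
\]
whose solution is $\hat{\vect g} = (\hat T+\lambda I)^{-1}\hat b$ with $\hat T = \frac1n\sum_i\widetilde{\vect y}_i\otimes\widetilde{\vect y}_i$, $\hat b = \frac1n\sum_i y_i\widetilde{\vect y}_i$, population analogues $T = \E[\widetilde{\vect Y}\otimes\widetilde{\vect Y}] = L_{\vect K^{1/2}\vect C\vect K^{1/2}}$ and $b = \E[Y\widetilde{\vect Y}]$, and target $\vect g^0 = L_{\vect K}^{-1/2}\vect\beta^0$ (well defined by Assumption~\ref{assm:smooth-mult}), which satisfies $T\vect g^0 = L_{\vect K}^{1/2}L_{\vect C}\vect\beta^0 = L_{\vect K}^{1/2}(\vect\mu_2-\vect\mu_1) = \E[Y\widetilde{\vect Y}] = b$ since $\E[Y\vect Y] = \vect\mu_2-\vect\mu_1$. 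As $\E^*[\langle\vect Y^*,\vect\beta^0-\vect{\hat \beta}\rangle_\calH]^2 = \langle\hat{\vect g}-\vect g^0,T(\hat{\vect g}-\vect g^0)\rangle_\calH = \|T^{1/2}(\hat{\vect g}-\vect g^0)\|_\calH^2$, it remains to bound the last quantity.

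Then I would carry out the standard bias–variance decomposition. Using the identity $\hat{\vect g}-\vect g^0 = (\hat T+\lambda I)^{-1}\bigl(\frac1n\sum_i\widetilde{\vect y}_i\xi_i - \lambda\vect g^0\bigr)$ with residuals $\xi_i = y_i-\langle\vect y_i,\vect\beta^0\rangle_\calH$,
\[
T^{1/2}(\hat{\vect g}-\vect g^0) = T^{1/2}(\hat T+\lambda I)^{-1}\Bigl(\tfrac1n\sum_{i=1}^n\widetilde{\vect y}_i\xi_i\Bigr) - \lambda\,T^{1/2}(\hat T+\lambda I)^{-1}\vect g^0.
\]
The bias term is controlled with the elementary operator bounds $\|T^{1/2}(T+\lambda I)^{-1}\|_{\op}\le\frac{1}{2\sqrt\lambda}$ and $\|(T+\lambda I)^{-1/2}\|_{\op}\le\lambda^{-1/2}$ together with a step replacing $\hat T$ by $T$, yielding $\calO_p(\sqrt\lambda)$. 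For the variance term I would factor $T^{1/2}(\hat T+\lambda I)^{-1} = [T^{1/2}(T+\lambda I)^{-1/2}]\,[(T+\lambda I)^{1/2}(\hat T+\lambda I)^{-1}(T+\lambda I)^{1/2}]\,(T+\lambda I)^{-1/2}$ and bound the three factors separately: (i) $\|T^{1/2}(T+\lambda I)^{-1/2}\|_{\op}\le1$; (ii) the middle factor is $\calO_p(1)$ once $\|(T+\lambda I)^{-1/2}(\hat T-T)(T+\lambda I)^{-1/2}\|_{\op}$ is shown to be small via a Hilbert-space Bernstein/Hoeffding bound (Theorem~\ref{thm:hoeff}), for which Assumption~\ref{assm:bounded-mult} ensures the $\widetilde{\vect y}_i\otimes\widetilde{\vect y}_i$ are i.i.d.\ and bounded; (iii) $\E\bigl\|(T+\lambda I)^{-1/2}\frac1n\sum_i\widetilde{\vect y}_i\xi_i\bigr\|_\calH^2 = \frac1n\operatorname{Tr}\bigl((T+\lambda I)^{-1}\E[\xi^2\,\widetilde{\vect Y}\otimes\widetilde{\vect Y}]\bigr)\le\frac{C}{n}\operatorname{Tr}((T+\lambda I)^{-1}T) = \frac{C}{n}D(\lambda)\le\frac{Cc}{n}\lambda^{-\theta}$ by Assumption~\ref{assm:effective-mult} (using boundedness to dominate $\xi^2\,\widetilde{\vect Y}\otimes\widetilde{\vect Y}$ by a multiple of $T$ in the positive-semidefinite order, then Markov's inequality). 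Collecting the pieces gives $\|T^{1/2}(\hat{\vect g}-\vect g^0)\|_\calH^2 = \calO_p(\lambda + n^{-1}\lambda^{-\theta})$, and the prescribed choice $\lambda\asymp n^{-1/(1+\theta)}$ balances the two terms to give the stated rate.

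I expect the step to treat most carefully to be the product-RKHS reduction in the first paragraph: one must verify that the two a priori heterogeneous regularization mechanisms — the explicitly specified kernel $K_{\R^3}$ on the Euclidean factor and the implicit differential penalty $J$ on the manifold factor — combine into a single quadratic RKHS penalty on $\calH$, which is precisely what forces the hypothesis $\lambda_1 = \lambda_2$ and what allows the whole analysis to be phrased through the single resolvent $(T+\lambda I)^{-1}$. The remaining delicate point is shared with the univariate proof: the residual $\xi = Y-\langle\vect Y,\vect\beta^0\rangle_\calH$ is \emph{not} independent of $\vect Y$, so the usual ``condition on the predictors'' argument is unavailable; this is handled exactly as before, by noting that only the first-moment identity $\E[\xi\,\widetilde{\vect Y}] = b - T\vect g^0 = 0$ is needed in step (iii) (it holds by the defining equation of $\vect\beta^0$), after which all concentration steps rely on boundedness rather than independence — modulo the harmless discrepancy between the plug-in weights $-n/n_1,\,n/n_2$ and $-1/\pi_1,\,1/\pi_2$, which contributes a lower-order term as in the proof of Theorem~\ref{thm:risk_manifold}.
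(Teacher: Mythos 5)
Your proposal is correct and follows essentially the same route as the paper: the entire content of the paper's proof of Theorem~\ref{thm:risk_biv} is precisely your first paragraph's reduction --- with $\lambda_1=\lambda_2=\lambda$ the two penalties combine into $\lambda\|\vect f\|_\calH^2$ under the substitution $\vect\beta=L_{\vect K}^{1/2}\vect f$ with the block-diagonal kernel, after which the univariate argument is repeated verbatim with $T=L_{\vect K^{1/2}\vect C\vect K^{1/2}}$. Your subsequent bias--variance split is a minor cosmetic variant of the paper's (you center on the residuals $\xi_i$, so the term $(T-T_n)\vect g^0$ lands in your variance piece rather than in the paper's bias piece $I_2$), but both rest on the same ingredients --- the operator concentration bounds of Theorems~\ref{thm:Tdiff}--\ref{thm:Tsum}, the identity $\E[Y\vect Y]=\vect\mu_2-\vect\mu_1$ in place of independence of the residual from the predictor, and the effective-dimension bound --- and yield the same rate.
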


\section{Discretization}\label{sec:discretization}

Consider a triangle mesh, denoted by $\mathcal{M}_\mathcal{T}$, which is formed by the union of a finite set of triangles, $\mathcal{T}$. These triangles share a common set of $s$ vertices, denoted as $\xi_1, \ldots, \xi_s$. Let $\mathcal{M}_\mathcal{T}$ be an approximate representation of the manifold $\mathcal{M}$. We then introduce the linear finite element space $\calW_\mathcal{T}$ consisting of a set of globally continuous functions over $\mathcal{M}_\mathcal{T}$ that are affine within each triangle $\tau$ in $\mathcal{T}$, i.e.,
\begin{equation*}
\calW_\mathcal{T} = \{ w \in C^0(\mathcal{M}_\mathcal{T}): w|_{\tau} \text{ is affine } \forall \tau \in \mathcal{T} \}.
\end{equation*}
The space $\calW_\mathcal{T}$ is spanned by the Finite Elements (FE) basis $\psi_1, \ldots, \psi_s$, where $\psi_l(\xi_j)=1$, if $l=j$, and $\psi_l(\xi_j)=0$ otherwise. In Figure~\ref{fig:basis}, we show one element of this basis. Moreover, define $\vect{\psi}$ as the vector-valued function $\vect{\psi}(\cdot) = \left(\psi_1(\cdot), \ldots, \psi_s(\cdot) \right)^T$. Our goal is to find an approximate solution $\hat \beta^F_\mathcal{T}$ of the form
\begin{equation}\label{eq:beta_f_discr}
\beta_\mathcal{T}^F(\cdot) = \sum_{l=1}^s c^F_l \psi_l(\cdot) =  (\vect{c}^{F})^T \vect{\psi}(\cdot),
\end{equation}
where $\vect{c}^{F} = \left(c^F_1, \cdots, c^F_s \right)^T$.

Let now $M$ and $S$ be the sparse mass and stiffness $s \times s$ matrices defined as $(M)_{jj'} =\int_{\mathcal{M}_{\mathcal{T}}} \psi_j \psi_{j'}$ and $(S)_{jj'}=\int_{\mathcal{M}_\mathcal{T}} \nabla_{\mathcal{M}_\mathcal{T}} \psi_j \cdot \nabla_{\mathcal{M}_\mathcal{T}} \psi_{j'}$, where $\nabla_{\mathcal{M}_\mathcal{T}}$ is the gradient operator on the mesh $\mathcal{M}_\mathcal{T}$. For $\beta_\mathcal{T}^F$ of the form given in equation (\ref{eq:beta_f_discr}), we have that the penalty term $J(\cdot)$ can be approximated as $(\vect c^F)^T D_{\mathcal{M}_\mathcal{T}} \vect c^F$, with $D_{\mathcal{M}_{\mathcal{T}}} = SM^{-1}S + \epsilon M$ \parencite{lila2016smooth}. Further, following an approach often adopted in FEA \parencite{fried1975finite,hinton1976note}, we replace the dense matrix $M^{-1}$ with the sparse matrix $\tilde{M}^{-1}$, where $\tilde M$ is the diagonal matrix such that $\tilde M_{jj} = \sum_{j'} M_{jj'}$. This results in the sparse penalty matrix $D_{\mathcal{M}_{\mathcal{T}}} = S\tilde{M}^{-1}S + \epsilon M$. In practice, each functional observation $x_i$ is also of the form given in equation (\ref{eq:beta_f_discr}). Therefore, denoting by $\mathbb{X}$ the $n \times s$ matrix where each row consists of the basis coefficients of $x_i$, the terms $\{\langle x_i, \beta^F \rangle_{\calL^2(\M)} \}$ can be approximated by the entries of the vector $\mathbb{X} M \vect c^F$.

We now turn our attention to the estimate $\hat \beta^G \in \calV(\R^3)$. 
Since for $\calV(\R^3)$ we have an explicit form of the associated reproducing kernel $K_{\R^3}$, we employ the representer theorem \parencite{wahba1990spline,yuan2010reproducing} and take $\hat \beta^G$ of the form 
\begin{equation}\label{eq:beta_g_discr}
\beta^G (\cdot) = \sum_{i=1}^n
c^G_i \int_{\R^3} K_{\R^3}(p, \cdot) v_i(p) \, dp.
\end{equation}
On the right hand side of Figure~\ref{fig:basis}, we show an example of a basis function $\int_{\R^3} K_{\R^3}(p, \cdot) v_i(p)\,dp$. Then, we have that $\|\beta^G\|^2_{\calV} =  (\vect{c}^G)^T \Sigma \vect{c}^G$, where 
$\vect{c}^G = \left(c_1^G, \ldots, c_n^G \right)^T$ and $\Sigma$ is a $n \times n$ matrix with entries
\[
\Sigma_{ij} =
\int \int v_i(p)^T K_{\R^3}(p, q) v_j(q)\, dp\, dq.
\]

\begin{figure}[!htb]
\centering
\includegraphics[width=0.6\textwidth]{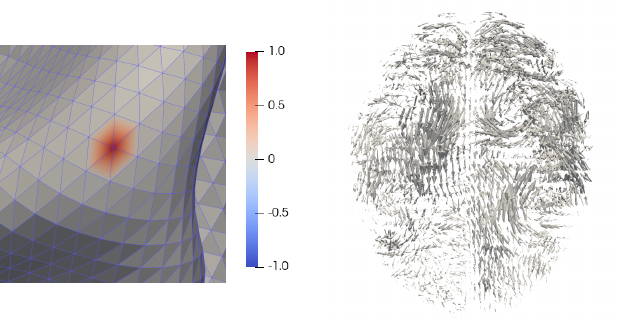}
\caption[]{On the left hand side, we show an element of the FE basis $\left\{\psi_l: \M_{\calT} \ra \R, l=1,\ldots,s \right\}$. This is a scalar affine function within each triangle of the mesh $\M_{\calT}$ that takes value $1$ on a fixed vertex and value $0$ on every other vertex. On the right hand side, we show an element of the basis $\left\{\int_{\R^3} K_{\R^3}(p, \cdot) v_i(p) \,dp, i=1,\ldots,n \right\}$. This is a smooth vector-valued function from $\R^3$ to $\R^3$.}
\label{fig:basis}
\end{figure}

As a result, the coefficients of the approximate solution of the model in equation (\ref{eq:model_ls_biv}) are given by
\begin{equation}
\left(\vect{\hat{c}}^G, \vect{\hat{c}}^F \right) = \argmin_{\vect c^G \in \R^{n}, \vect c^F \in \R^{s}} \left\|\vect y - \Sigma \vect c^G - \mathbb{X} M \vect c^F \right\|^2_2 + \lambda_1  (\vect c^G)^T \Sigma \vect c^G + \lambda_2 (\vect c^F)^T D_{\mathcal{M}_\mathcal{T}} \vect c^F,
\end{equation}
where $\vect y = \left(y_1, \ldots, y_n\right)^T$ is the vector of auxiliary response variables. It is easy to check that this minimization problem can be equivalently written as the following augmented quadratic least-squares problem
\begin{equation}\label{eq:linear_system}
\left(\vect{\hat{c}}^G, \vect{\hat{c}}^F \right) = \argmin_{\vect c^G \in \R^{n}, \vect c^F \in \R^{s}} \left\|\begin{bmatrix}
		\vect y\\
		\vect{0}
	\end{bmatrix} - A \begin{bmatrix}
		\vect{c}^G\\
		\vect{c}^F
	\end{bmatrix} \right\|^2_2,
\end{equation}
where $\vect{0}$ is the zero-vector of length $2s+n$ and with
\begin{equation*}
A =	\begin{bmatrix}
		\Sigma& \mathbb{X} M\\
		0& \lambda_2^{\frac{1}{2}} \tilde{M}^{-\frac{1}{2}} S \\
		0& \lambda_2^{\frac{1}{2}}\epsilon^{\frac{1}{2}} M^{\frac{1}{2}}\\

		\lambda_1^{\frac{1}{2}}\Sigma^{\frac{1}{2}} &0
	\end{bmatrix}.
\end{equation*}
Note that for $n \ll s$, which is the setting of our application, the matrix $A$ is sparse. Therefore, the minimization problem (\ref{eq:linear_system}) can be efficiently solved by conjugate gradients, or its variations, e.g. LSQR \parencite{paige1982algorithm}, without requiring the explicit computation of the high-dimensional normal matrix $A^T A$ -- a quantity related to the covariance structure of the functional predictors. 

An approximate solution to the univariate model in equation (\ref{eq:model_ls_univ}) follows as a special case of the multivariate case considered here.

\section{Application}\label{sec:application}
\subsection{Data and preprocessing}
We analyze a cohort of $n=484$ subjects from the ADNI and PPMI studies. On the basis of the ATN classification scheme \parencite{jack2016unbiased}, each subject was assigned to one of the two diagnostic categories -- C: Control ($n_1 = 100$) and AD: Alzheimer's Disease ($n_2 = 384$).
Here, we focus on data collected at the baseline visit, which includes, among other imaging modalities, structural T1-weighted MRI.  

The T1-weighted images were preprocessed using FreeSurfer \parencite{dale1999cortical, fischl1999cortical}. Specifically, white matter, grey matter, and cerebrospinal fluid were segmented and used to extract the outer and inner surfaces of the cerebral cortex. From these two surfaces, we generated a central surface interpolating the midpoints between the outer and inner layers, which offers an accurate representation of the two-dimensional anatomical structure of the cerebral cortex. This representation has the benefit of encoding a notion of distance between brain regions that is neurologically more relevant than the original volumetric representation. The cortical surface can moreover be coupled with one or more maps describing complementary structural or functional properties of the cortex, such as cortical thickness measurements \parencite{fischl2000measuring}, fMRI signals, or connectivity maps \parencite{smith2013functional,yeo2011organization}. In this study, we focus on cortical thickness, which is estimated from the distances between the outer and inner surfaces of the cerebral cortex. Next, the $n$ surfaces were registered and sub-sampled.

As a result of the preprocessing stage, we obtain $n = 484$ triangle meshes $\{\M^\calT_i\}$ consisting of $s=64$K vertices in correspondence across subjects, along with a set of triangular faces defining how these vertices are connected to delineate the surfaces. By classical Generalized Procrustes Analysis \parencite{dryden2016statistical}, translation, rigid rotation, and scale were removed from the data, while jointly estimating a template $\M^\calT$, which is also a triangle mesh with $s=64$K vertices in correspondence with those of the individual subjects. Each surface has been moreover coupled with cortical thickness measurements at the mesh vertices, which we model as a real piecewise linear continuous function $z^{\calT}_i$ over the mesh $\M^{\calT}_i$. 

The preprocessing stage results in a set of FoSs $\left\{\left(\M^{\calT}_i, z^{\calT}_i \right), i=1,\ldots,n \right\}$, which are discretized versions of the continuous idealized objects $\left\{\left(\M_i, z_i \right), i=1,\ldots,n \right\}$ introduced in Section~\ref{sec:lin_repr}. To simplify the notation, we drop the superscript $\calT$. Moreover, we denote the diagnostic labels by $\left\{g_i \in \{\text{C},\text{AD}\}, i=1,\ldots,n \right\}$. Three examples of such FoSs, and associated diagnostic labels, are given in Figure~\ref{fig:setting}A.

Here, we are interested in using the proposed models to identify subjects with AD from the extracted FoSs. The interpretability of these methods is an important feature. Indeed, while it is crucial to build models with good classification accuracy, it is equally important to describe the estimated relationship between the predictors and the outcome variable, in order to inform subsequent studies and generate data-driven hypotheses about the pathophysiology of AD.

\subsection{Linear functional representations}
For each FoS, we compute a function $v_i \in \calV(\R^3)$ such that $\varphi_{v_i}(\M)$ closely approximates $\M_i$, where closeness is measured as the sum of Euclidean distances between the corresponding vertices of $\varphi_{v_i} \left(\M \right)$ and $\M_i$. As noted in Appendix~\ref{sec:diff_op}, alternative definitions of surface similarity could also be used. We can then transport the function $z_i: \M_i \ra \R$ onto the template defining a continuous piecewise linear function $x_i = z_i \circ \varphi_{v_i}$. This leads to the definition of the bivariate functional representation $\left(v_i, x_i \right)$ that is a linear representation of the FoS $(\M_i , z_i) \approx (\varphi_{v_i} (\M) , x_i \circ \varphi^{-1}_{v_i})$. Further details on the computation of $v_i$ can be found in Appendix~\ref{sec:diff_op}.

\subsection{Discriminant analysis}
Our aim is to estimate directions in the parametrizing geometric and thickness spaces that are most predictive of AD. To this end, we apply the model introduced in Section~\ref{sec:multivariate-est} to the training data $\{g_i, v_i-\bar{v}, x_i-\bar{x}\}$, where $\bar{v}$ and $\bar{x}$ are the sample means of $\{v_i\}$ and $\{x_i\}$. The training data are a subset of the entire dataset containing 50\% of the observations.
From this process, we derive the estimates $\hat \beta^G: \R^3 \ra \R^3$ and $\hat \beta^F: \M \ra \R$. Given a new subject with predictors $(v^*,x^*)$, these estimates can be used in conjunction with the classification rule $\langle v^* -\bar{v}, \hat \beta^G \rangle + \langle x^* - \bar{x}, \hat \beta^F \rangle > c^{\text{th}}$ to predict the diagnostic label of a new subject. The cut-off level $c^{\text{th}}$ can be chosen by computing sensitivity and specificity on a test set, for different values $c^{\text{th}}$, and by selecting the desired level and type of accuracy.

\begin{figure}[!htb]
\centering
\includegraphics[width=1\textwidth]{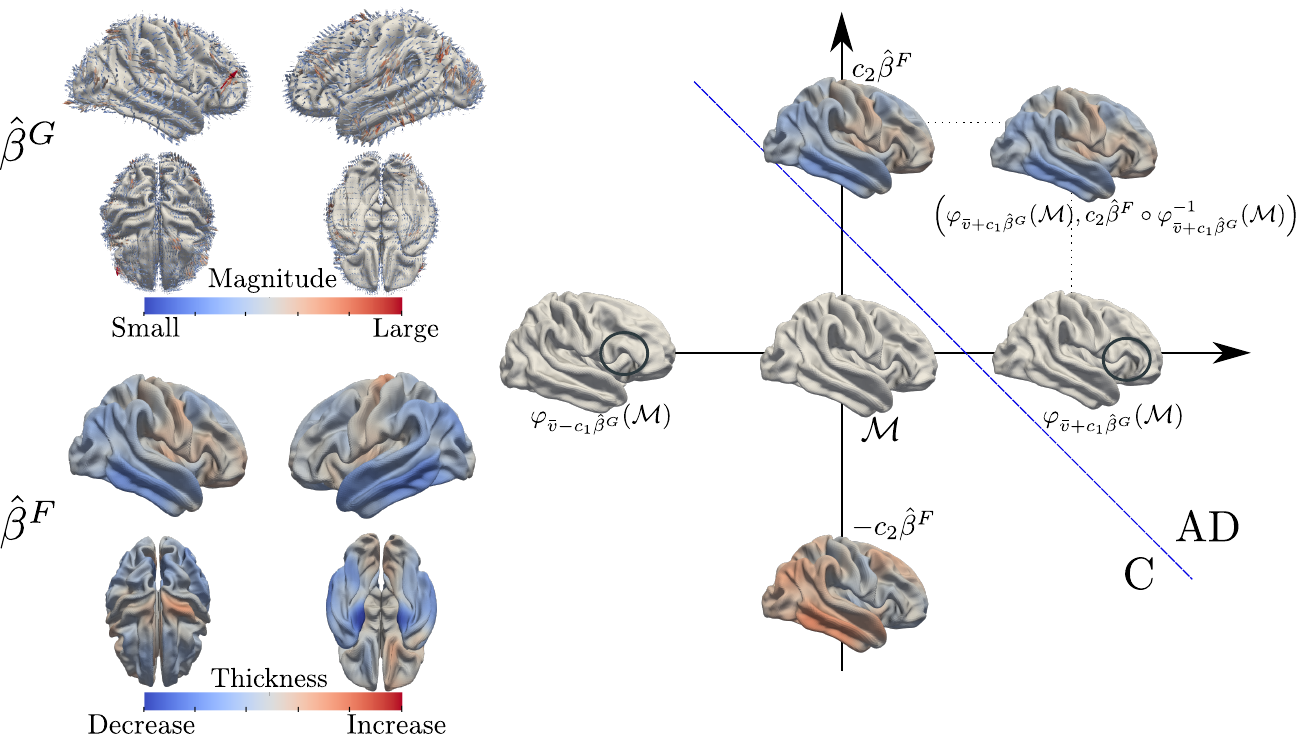}
\caption[]{On the left hand side, we show the most discriminant geometric and thickness directions as estimated from the linear representations $\{(v_i-\bar{v}, x_i-\bar{x})\}$. These are a vector field $\hat \beta^G: \R^3 \ra \R^3$, representing the most predictive geometric pattern of AD, and a function $\hat \beta^F: \M \ra \R$, representing the most predictive cortical thickness pattern of AD. For a new FoS, with linear representation $(v^*, x^*)$, we compute the score $\langle v^* -\bar{v}, \hat \beta^G \rangle + \langle x^* -\bar{x}, \hat \beta^F \rangle$ and predict whether the subject has AD by comparing the score value with a predetermined threshold $c^{\text{th}}$. On the right hand side, we depict the process of mapping back the estimates $\hat \beta^G$ and $\hat \beta^F$ to the space of FoSs. On the same space, we also pictorially map the classification rule adopted. In the $\hat \beta^F$ figure, the blue regions represent the areas of the cortical surface where a thinner cortex, relative to the population average, is indicative of AD. These are mostly localized in the lateral temporal, entorhinal, inferior parietal, precuneus, and posterior cingulate cortices. The red arrows in the $\hat \beta^G$ figure represent the regions where differences in the morphological configuration of the cerebral cortex, compared to the population average, are most predictive of AD. The specific types of morphological changes can be inspected by comparing the surfaces $\varphi_{\bar{v} - c_1 \hat \beta^G}(\M)$ and $\varphi_{\bar{v} + c_1 \hat \beta^G}(\M)$, on the right hand side diagram. 
}
\label{fig:results}
\end{figure}

These estimates effectively identify linear directions $\{c_1 \hat \beta^G, c_1 \in \R \}$ and $\{c_2 \hat \beta^F, c_2 \in \R \}$, in their respective spaces, that can be interpreted as the most discriminant geometric and thickness directions. Specifically, large values of $c_1 \in \R$ and $c_2 \in \R$ describe configurations $c_1 \hat \beta^G$ and $c_2 \hat \beta^F$ that are predictive of AD. Low values of $c_1 \in \R$ and $c_2 \in \R$ describe configurations that are instead predictive of the subject being healthy. Moreover, given the additive modeling assumption on the geometric and thickness components, for every configuration $\left(c_1 \hat \beta^G, c_2 \hat \beta^F \right)$, an increase of $c_1$ ($c_2$) for a fixed $c_2$ ($c_1$) describes a configuration that is more strongly associated with AD.  

Crucially, these linear trajectories on the parametrizing space can be mapped back to the original space of FoSs by using equation (\ref{eq:FoS_repr}), defining the curved space
\[
\left( \varphi_{\bar{v} + c_1 \hat \beta^G}(\M), c_2 \hat \beta^F \circ \varphi^{-1}_{\bar{v} + c_1 \hat \beta^G}\right), \qquad c_1, c_2 \in \R.
\]

We fit the proposed model for different choices of the parameters $\lambda_1$ and $\lambda_2$. Recall that $\lambda_1$ controls the regularity of the geometric discriminant direction and $\lambda_2$ that of the thickness discriminant direction, with high values virtually constraining the solution to be the zero function. The final choice of $\lambda_1$ and $\lambda_2$ is the result of a compromise between classification accuracy on the test set and the consistency of the estimated discriminant directions with the neurodegenerative nature of the disease (see also Discussion). The test Area under the ROC Curve (AUC) of the selected model is 0.7006.

\subsection{Results}\label{sec:results}
On the left hand side of Figure~\ref{fig:results}, we show the estimated most discriminant geometric and thickness directions, i.e., $\hat \beta^G: \R^3 \ra \R^3$ and $\hat \beta^F: \M \ra \R$. These have been estimated by applying the model in equation (\ref{eq:model_ls_biv}) to the linear representations $\{(v_i-\bar{v}, x_i-\bar{x})\}$. The colormap describing $\hat \beta^F$ illustrates what types of variations, with respect to the population average cortical thickness, are most predictive of AD. Specifically, a thinner cerebral cortex in the blue areas (i.e., lateral temporal,  entorhinal, inferior parietal, precuneus, and posterior cingulate cortices) is associated with AD. These results are consistent with the typical thickness signature of AD observed to date \parencite[see, e.g., ][]{bondareff1989neurofibrillary,dickerson2009cortical,sabuncu2011dynamics}. The geometric component $\hat \beta^G$ is instead a vector field in $\R^3$. This is a linear representation of the morphological variations, with respect to the population average cortex geometry, that are associated with AD. While a full understanding of its meaning is only possible by mapping $\hat \beta^G$ back to the space of FoSs, i.e., by examining $\varphi_{\bar{v} + c_1 \hat \beta^G}(\M)$ for different choices of $c_1 \in \R$, the magnitude of the vector field $\hat \beta^G$, at any fixed point, offers a rough indication of the cortical regions whose morphological variations are most relevant to the classification problem.

On the right hand side of Figure~\ref{fig:results}, we show the FoSs associated with the linear representations $\hat \beta^G$ and $\hat \beta^F$, that is, $\left( \varphi_{\bar{v} + c_1 \hat \beta^G}(\M), c_2 \hat \beta^F \circ \varphi^{-1}_{\bar{v} + c_1 \hat \beta^G}\right)$ with $c_1, c_2 \in \R$. These describe the most predictive patterns of AD in terms of the original neurobiological objects. We have circled a specific area of the brain to ease comparison and highlight the morphological patterns that the model deems relevant to the classification problem. 

\subsection{Comparison against alternative approaches}\label{sec:appl_comparison}

In this section, we compare the test AUC of our proposed classification method with alternative models and evaluate different representation models for FoSs.  In addition to the functional linear discriminant model (FLDA) that we propose, we also consider the following alternatives: (i) FPCA+LDA: The geometry-aware FPCA model proposed in \textcite{lila2016smooth}, followed by multivariate LDA \parencite{hastie2009elements} on the PC scores; (ii) Lasso: A logistic regression model with lasso regularization \parencite{tibshirani1996regression}; (iii) Ridge: A logistic regression model with an $\ell^2$ regularization \parencite{hoerl1970ridge}; (iv) FQDA: The approximate functional quadratic discriminant model defined in Section~\ref{sec:non_linear}; (v) RF: A Random forest model \parencite{breiman2001random}; (vi) SVM: A support vector machine with a squared exponential kernel \parencite{cortes1995supportvector}; (vii) NN: A multilayer feedforward neural network \parencite{hastie2009elements}.

\begin{table}[h!]
\scriptsize
\centering
 \begin{tabular}{l c c c c c c c c} 
 \toprule
 Representation model & \multicolumn{4}{c}{Linear methods} & \multicolumn{4}{c}{Nonlinear methods}\\
 \midrule
  & FLDA & FPCA+LDA & Lasso & Ridge & FQDA & RF & SVM & NN \\ [0.5ex] 
  \cmidrule(lr){2-5}\cmidrule(lr){6-9}
 Thickness & 0.7626 & 0.7583  &  0.7487  &  0.7632 & \textbf{0.7710} &  0.6043  &  0.7597  &  0.7678 \\ 
 Thickness \& Displacement & 0.6623 & - & 0.6626  &  0.6571 & - & 0.6742 & \textbf{0.6861}  &  0.6771 \\
 Thickness \& Shape spectrum & - & - & \textbf{0.7832}  &  0.6638 & - &   0.5797  &  0.7606  &  0.6878\\
 Proposed FoSs representation & \textbf{0.7716} & - & 0.7484  &  0.7646 & - &   0.7132  &  0.7600 &  0.7443\\
 \bottomrule
 \end{tabular}
 \caption{\label{table:AUC_final} The test AUC of the classification methods applied to the data of our final application. Four different representation models have been considered: (i) the registered thickness map $x_i:\M \ra \R$ without geometric information; (ii) the parametrization $h_i: \M \ra \R^4$, where the first three components are the surface coordinates, and the last component is thickness; (iii) the registered thickness map $x_i:\M \ra \R$ and the first 200 eigenvalues of the Laplace-Beltrami operator computed on the surface $\M_i$; and (iv) the proposed representation model $(x_i, v_i)$. The symbol `-' indicates that although the method could be adapted to accommodate the specific FoS representation model, its implementation is beyond the scope of this paper and is left to future work. For each representation model, the top-performing method is highlighted.}
\end{table}

Furthermore, besides the proposed representation model $(v_i,x_i)$ for FoSs, we also consider the following representations: (i) Thickness: Spatially normalized thickness maps $x_i:\M \ra \R$ without geometric information; (ii) Thickness \& Displacement: The parametrizations $h_i: \M \ra \R^4$, where the first three components are the surface coordinates, and the last component is the (spatially normalized) thickness map; (iii) Thickness \& Shape spectrum: The spatially normalized thickness maps $x_i:\M \ra \R$ and the first 200 eigenvalues of the Laplace-Beltrami operator computed on the surface $\M_i$, i.e., a spectral representation of shape \parencite{reuter2006laplacebeltrami}.

To evaluate the listed methods and representation models, we split the dataset into three sets, namely the training set, validation set, and test set, comprising $50\%$, $20\%$, and $30\%$ of the data, respectively. While a Monte Carlo evaluation of these methods would be desirable, it is computationally prohibitive, so we defer that analysis to the simulation setting in Appendix~\ref{sec:simulations}. However, we use the same exact data split for all methods. The models are trained on the training set, hyperparameters are chosen to maximize the AUC on the validation set and the selected model is tested on the test set, resulting in the AUC scores presented in Table~\ref{table:AUC_final}. Note that, in contrast to the results presented in Section~\ref{sec:results} and Figure~\ref{fig:results}, all hyperparameters of the proposed methods have been chosen to maximize the AUC on the validation set, rather than striking a balance between the classification accuracy and consistency of the estimated discriminant directions with the neurodegenerative nature of the disease. Hence, the test AUC value of the proposed method is different from that in Section~\ref{sec:results}.

\begin{figure}[!htb]
\centering
\includegraphics[width=1\textwidth]{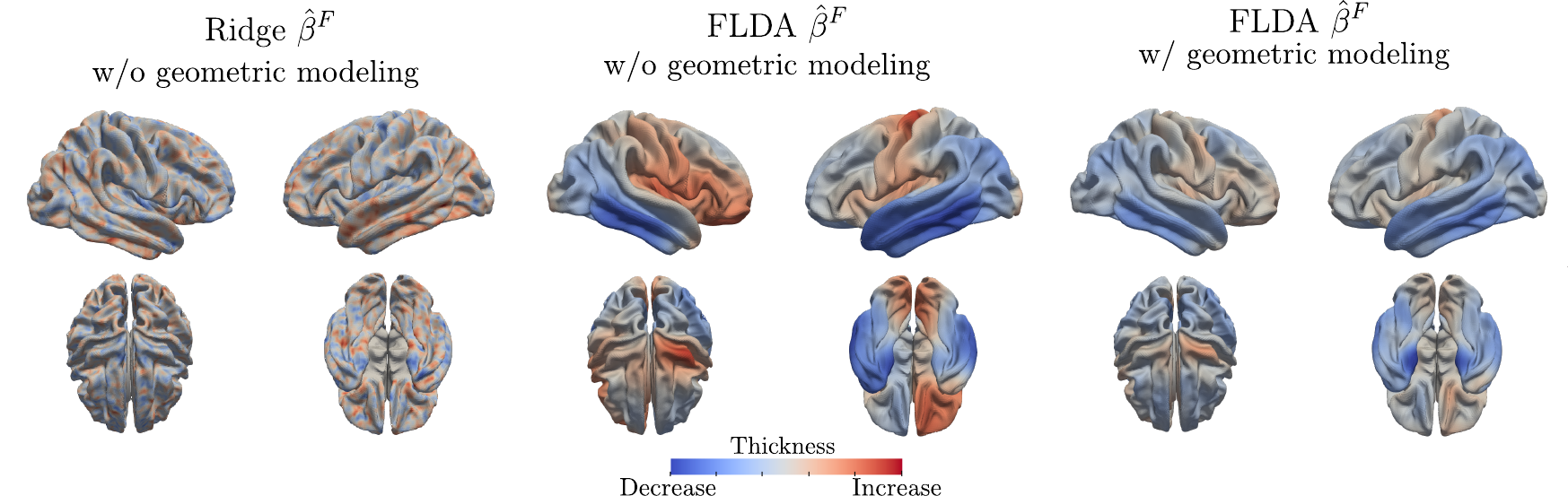}
\caption[]{On the left side, we show the discriminant direction derived from applying a ridge logistic regression model to the thickness maps. In the center, we show the discriminant direction resulting from fitting the proposed model in equation (\ref{eq:model_ls_univ}) to the thickness maps. Although it does not account for subject-specific geometric variations, this model enforces smoothness. On the right side, we have the cortical thickness discriminant direction obtained by fitting the model in equation (\ref{eq:model_ls_biv}), which explicitly accounts for inter-subject geometric differences. The results of the logistic regression are more difficult to interpret due to the high spatial variability. The model in equation (\ref{eq:model_ls_univ}) provides more interpretable results thanks to its smoothness penalty, but suggests that a \textit{thicker} cortex in the red areas is indicative of AD, which is not physiologically plausible. When we explicitly model geometric differences, this evidence seems to disappear. This suggests that there is a non-negligible dependence structure between the predictors modeling geometry and those modeling thickness. Differences that seemed to be related to cortical thickness in the model without the geometric component are now captured by the term that models cortical geometric variations. Furthermore, when we model inter-subject geometric differences the entorhinal cortex atrophy in the medial temporal lobe is identified as the strongest predictor of AD. This is consistent with pathological findings and staging of early AD \parencite{braak2006staging}.}
\label{fig:compare}
\end{figure}

For standard multivariate models, we use the values of $x_i$ at the vertices of the template mesh (64K values) and the RKHS coefficients of the estimated $v_i$ (192K coefficients) to construct the data matrix.  We have also implemented a variation of the functional linear discriminant model introduced in Section~\ref{sec:multivariate}, for multivariate functions whose components share a non-linear domain $\M$, in order to accommodate the representation $h_i: \M \rightarrow \R^4$.

The results are shown in Table~\ref{table:AUC_final}, from which we can make several observations. Firstly, if the goal is to maximize prediction accuracy, then the best-performing model is a lasso-penalized generalized linear model applied to thickness maps and the first 200 eigenvalues of the Laplace-Beltrami operator of the surfaces. However, as mentioned in the introduction, this ``lossy'' shape representation cannot be mapped back to the original space of neurobiological objects, leading to a less interpretable model. Additionally, our results show that in the context of our application, the representation $(v_i,x_i)$ performs better across all methods than using the representation $h_i: \M \ra \R^4$. The latter appears to be more susceptible to overfitting, resulting in inferior performance even when compared to models that use thickness only. Finally, classification using thickness alone produces satisfactory results. The top-performing models are the proposed FLDA and FQDA, and the ridge logistic regression model. One possible explanation for this is that the registered thickness maps may include some geometric information due to misregistration. While incorporating geometric information into the model may lead to only minor improvements in classification performance, as shown in Figure~\ref{fig:compare}, the estimated discriminant direction can be significantly different between the two models. Although the ground truth is unknown, the estimated discriminant direction when geometric information is included is more consistent with the neurodegenerative nature of the disease, as explained in the next section.

\subsection{Discussion}\label{sec:discussion}

The results in Figure~\ref{fig:results} identify the typical AD thickness signature. Several studies that focus on identifying AD-vulnerable areas include the regions found in our analysis \parencite[see, e.g., ][]{bondareff1989neurofibrillary,dickerson2009cortical,sabuncu2011dynamics}. However, there is some variability in the estimated regions. For instance, \textcite{sabuncu2011dynamics} used a dynamic model and found strongest changes in the inferior parietal regions and the posterior cingulate. It should be noted that these studies typically consist of massive univariate analyses between the cortical thickness at each vertex, or each parcel, and the diagnostic label. They are therefore taking a feature-centric perspective on the problem. It is not clear how these findings would generalize to out-of-sample data \parencite{li2020statistical}. 

To demonstrate the importance of modeling cortical geometry, we compare our results to those obtained by fitting a ridge logistic regression model and the proposed model in equation (\ref{eq:model_ls_univ}), i.e., by discarding inter-subject geometric differences. We compare these estimates in Figure~\ref{fig:compare}. What we observe is that ridge logistic regression yields estimated discriminant directions that are more difficult to interpret, due to the high spatial variability. Except for the entorhinal cortex, the functional model in equation (\ref{eq:model_ls_univ}) is able to capture the main areas where cortical thinning is associated with AD. However, this model also suggests that a thicker cortex in certain regions (dark red) is associated with AD, contradicting the neurodegenerative nature of AD. Interestingly, introducing the geometric component in the model reduces such effects. This may also be caused by the geometric component now capturing systematic misregistration. In order to verify such a hypothesis, further validation of the estimated geometric component is required in controlled settings where registration is more reliable, e.g., in the longitudinal setting.

\section{Conclusions}\label{sec:conclusion}

We introduce a framework for the discriminant analysis of functional data supported on random manifold domains, i.e., FoSs. To this aim, we adopt linear representations of these objects that are bivariate functional data belonging to linear spaces. We then define a functional linear classification model on the parametrizing space. Thanks to a penalized least-squares formulation, the proposed model is able to estimate the most discriminant direction in the data without requiring the explicit computation of the covariance function of the predictors or low-rank approximations thereof. This allows us to reduce the memory requirements by five orders of magnitude and ultimately be able to run our model on a standard workstation. The complexity of the solution is controlled by means of differential penalties that are aware of the geometry of the domain where the functional data are supported.

We apply the proposed model to the analysis of modern multi-modal neuroimaging data. Specifically, we estimate interpretable discriminant directions that are able to leverage both geometric and thickness features of the cerebral cortex to identify subjects with AD. Our results are consistent with those in the neuroscience literature. 

The model proposed can be applied to several imaging settings that lead to FoSs representations, such as musculoskeletal imaging \parencite{gee2018how} or cardiac imaging \parencite{biffi2018threedimensional}. It is also important to highlight that the proposed model is not a mere generalization of existing models for functional data supported on one-dimensional domains to multidimensional domains. We believe that its application to one-dimensional functional data, where the bivariate representation is given by the registered functions and associated registration maps, leads to a novel classification approach in this simplified setting.

\section*{Acknowledgments}
{\footnotesize
\noindent 
Data used in the preparation of this article were obtained from two sources: the Alzheimer's Disease Neuroimaging Initiative (ADNI) and the Parkinson's Progression Markers Initiative (PPMI). ADNI is funded by the National Institute on Aging, the National Institute of Biomedical Imaging and Bioengineering, and through generous contributions from the following: AbbVie, Alzheimer's Association; Alzheimer's Drug Discovery Foundation; Araclon Biotech; BioClinica, Inc.; Biogen; Bristol-Myers Squibb Company; CereSpir, Inc.; Cogstate; Eisai Inc.; Elan Pharmaceuticals, Inc.; Eli Lilly and Company; EuroImmun; F. Hoffmann-La Roche Ltd and its affiliated company Genentech, Inc.; Fujirebio; GE Healthcare; IXICO Ltd.; Janssen Alzheimer Immunotherapy Research \& Development, LLC.; Johnson \& Johnson Pharmaceutical Research \& Development LLC.; Lumosity; Lundbeck; Merck \& Co., Inc.; Meso Scale Diagnostics, LLC.; NeuroRx Research; Neurotrack Technologies; Novartis Pharmaceuticals Corporation; Pfizer Inc.; Piramal Imaging; Servier; Takeda Pharmaceutical Company; and Transition Therapeutics. The Canadian Institutes of Health Research is providing funds to support ADNI clinical sites in Canada. Private sector contributions are facilitated by the Foundation for the National Institutes of Health (\url{www.fnih.org}). The grantee organization is the Northern California Institute for Research and Education, and the study is coordinated by the Alzheimer's Therapeutic Research Institute at the University of Southern California. ADNI data are disseminated by the Laboratory for Neuro Imaging at the University of Southern California. 

PPMI --- a public-private partnership --- is funded by the Michael J. Fox Foundation for Parkinson's Research and funding partners. 
The complete list of PPMI funding partners can be found at \url{www.ppmi-info.org}.
 
ADNI data are available to the scientific community thorough the LONI Image and Data Archive at \url{http://adni.loni.usc.edu/data-samples/accessdata}. PPMI data can be accessed through \url{https://www.ppmi-info.org/access-data-specimens/download-data}.
}

\appendix
\appendixpage

\section{Linear functional representation model}
\subsection{Diffeomorphic deformation operator}\label{sec:diff_op}
The diffeomorphic operator $\varphi$ can be constructed as follows. Let $\{v_t \in \calV(\R^3): t \in [0,1]\}$ be a time-variant vector field such that $\int_0^1 \| v_t \|^2_{\calV(\R^3)}\, dt< \infty$. Then, the solution $\phi_v:[0,1] \times \R^3 \ra \R^3$, at time $t=1$, to the Ordinary Differential Equation (ODE)
\begin{align}\label{eq:ODE}
\begin{cases}
\frac{\partial \phi_v}{\partial t}(t,x) = v_t \circ \phi_v(t,x)\ \qquad &t \in [0,1], x \in \R^3,\\
\phi_v(0,x) = x \qquad &x \in \R^3,\\
\end{cases}
\end{align}
is a diffeomorphic deformation of $\R^3$ \parencite[see, e.g.,][]{younes2019shapes}. We then model $\{v_t:t \in [0,1]\}$ as a minimizer of the quantity $\int_0^1 \|v_t\|^2_\calV \, dt$, for a given initial vector field $v_0 \in \calV(\R^3)$ \parencite{miller2006geodesic}. Finally, the diffeomorphic operator is defined as $\varphi_{v_0}(x) = \phi_v(1,x)$, where $v_0 \in \calV(\R^3)$ is the initial vector field generating $\{v_t:t \in [0,1]\}$, and $\phi_v$ is the solution to the ODE in equation (\ref{eq:ODE}) for the computed $\{v_t:t \in [0,1]\}$.  

\subsection{Computation}\label{appendix:computation}
In practice, each surface $\M_i$ has a computational representation $\M^\calT_i$ that is a triangle mesh with $s$ vertices $\xi^i_1,\ldots,\xi^i_s$ in correspondence across the $n$ subjects. We use these vertices to perform Procrustes analysis \parencite{dryden2016statistical} and remove translation, size, and rigid rotations from the surfaces $\M^\calT_i$. In addition, Procrustes analysis yields a template $\M^\calT$ with vertices $\xi_1,\ldots,\xi_s$. 

The representation functions $\{v_i: v_i \in \calV(\R^3)\}$, associated with the surfaces $\{\M^\calT_i\}$, are then computed by solving the minimization problem
\begin{equation}\label{eq:minimization_geo}
{v}_i = \argmin_{v \in \calV(\R^3)} \sum_{l=1}^s \left\| \varphi_{v}(\xi_l) - \xi^i_l \right\|^2_{\mathbb{R}^3} + \lambda \|v\|^2_{\calV(\R^3)}, \qquad i=1,\ldots,n,
\end{equation}
where the least-squares term ensures that the deformed template $\varphi_{v_i}(\M^\calT)$ is a close approximation of $\M^\calT_i$. The term $\|v\|^2_{\calV(\R^3)}$ is a regularizing term that encourages the solution to achieve such a close approximation with a `minimal' deformation. The constant $\lambda$ is selected by inspecting the solutions on a small subset of the full cohort. To perform the actual computations, we use the MATLAB implementation fshapetk \parencite{charlier2015matching,charlier2017fshape}. Note that if the vertices were not in correspondence, that is, the surfaces had not been registered beforehand, the proposed framework would still be applicable by replacing the least-squares term in equation (\ref{eq:minimization_geo}) with a more general shape similarity measure $D(\cdot,\cdot)$. An example of such a similarity measure is found in \textcite{vaillant2005surface,vaillant2007diffeomorphic}, where the authors use the concept of currents, from geometric measure theory, to represent surfaces.

\subsection{Selecting an appropriate kernel}
The main requirement for choosing the kernel $K_{\R^3}$ is that the associated space $\calV(\R^3)$ is an admissible space \parencite{younes2019shapes}. Therefore, there must exist a positive constant $M$ such that for all $v \in \calV(\R^3)$, the following inequality holds:
\[
\|v\|_{1, \infty} \leq M\|v\|_{\calV(\R^3)},
\]
where $\| \cdot \|_{1, \infty}$ is the canonical norm of the space $C^1(\mathbb{R}^3)$. This condition guarantees that $\varphi_v$ is diffeomorphic for any $v \in \calV(\R^3)$.

Nonetheless, within the set of admissible spaces, the specific choice of the kernel can significantly influence the quality of the estimated representations. We have found the approach proposed in \textcite{bruveris2012mixture} to be effective. This uses a mixture of isotropic Gaussian kernels with different variances, which intuitively allows for a multi-scale representation of the surfaces. In our application, we use a mixture of six Gaussian kernels with variance parameters set to $(\sigma_1^2,\ldots,\sigma_6^2) = (64, 16, 4, 1, 0.25, 0.01)$. Our choice was informed by visual inspection of the differences between the estimated $\varphi_{v_i}(\M^\calT)$ and $\M_i^\calT$.

\subsection{Template estimation}\label{sec:template}

\RestyleAlgo{boxruled}
\begin{algorithm}
\KwData{The surfaces $\M_1,\ldots,\M_n$ and an initial guess for the template $\hat{\M}^{\{1\}} = \M_1$}
\KwResult{$\hat{\M} = \hat{\M}^{\{N_\text{iter}\}}$}
\For{\upshape $\text{iter} = 1,\ldots, N_\text{iter}$}{
\For{$i = 1,\ldots, n$}{
$\hat {v}^{\text{\{iter\}}}_i \gets \argmin_{v \in \calV(\R^3)} 	
	D\left( \varphi_{v}(\M^{\{\text{iter}\}} ), \M_i \right) + \lambda 	\|v\|^2_{\calV(\R^3)}$
	}
$\hat{\mu}^{\text{\{iter\}}} = \frac{1}{n} \sum_{i=1}^n \hat{v}^{\text{\{iter\}}}_i$

$\hat{\M}^{\text{\{iter\}}} = \varphi_{\hat{\mu}^{\text{\{iter\}}}}\left(\M \right)$
}
\caption{Algorithm for template estimation.}\label{alg:template}
\end{algorithm}

In this section, we introduce an algorithm designed to estimate a template leveraging the (formal) Riemannian structure of the manifold of diffeomorphisms adopted to model random manifold domains. One approach is to define the template so that the average of the linear representations $\{v_i\}$, located on the tangent space at the identity map, is zero. The details of this iterative centroid approach are outlined in Algorithm~\ref{alg:template}. For an overview of alternative approaches see \textcite{cury2014diffeomorphic}.

Despite leveraging GPU acceleration for computing RKHS norms and associated gradients, the process of computing $v_i$ for each subject still takes about 40 minutes in our application. This makes the process of estimating the template computationally prohibitive given the necessity of multiple iterations. Therefore, we have chosen to use a fixed template in our final application.

\section{Simulations}\label{sec:simulations}
In this section, we conduct simulations to assess the finite sample classification performance of the model proposed when compared to the models introduced in Section~\ref{sec:appl_comparison}. Here we focus on the functional univariate setting described in Section~\ref{sec:univariate}.

\begin{figure}[!htb]
\centering
\includegraphics[scale=0.8]{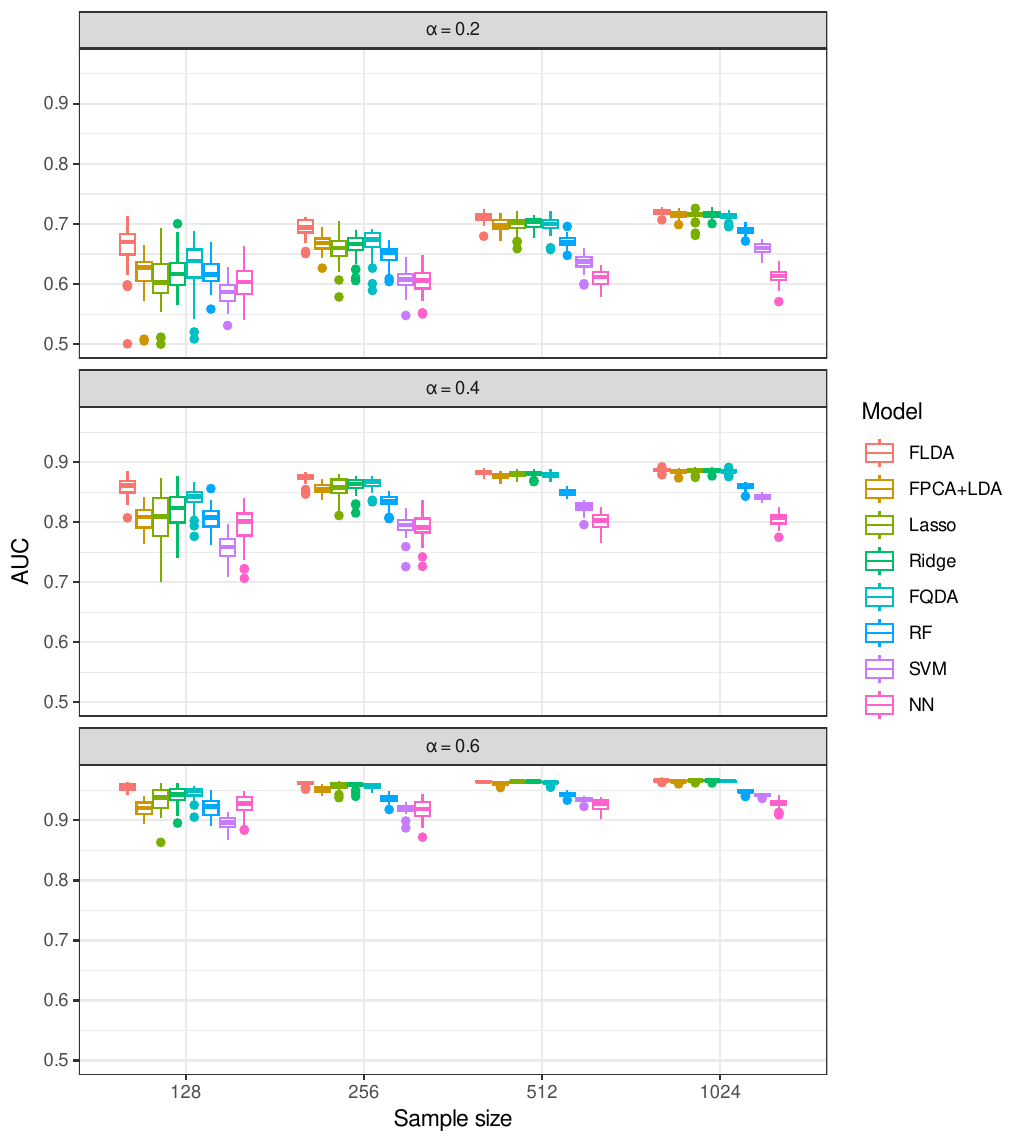}
\caption{Results of the simulation study to assess the performance of our proposed method, under the assumption of homogeneous covariances, for various sample sizes ($n=128,256,512,1024$) and signal-to-noise ratios ($\alpha=0.2,0.4,0.6$), where $\alpha$ reflects the strength of the discriminant signal. Prediction accuracy is measured using AUC and the simulations were repeated 50 times for each setting.}
\label{fig:AUC_homo}
\end{figure}

We use a triangle mesh $\mathcal{M}_{\mathcal{T}}$ with $642$ nodes that is an approximation of a brainstem. On this triangulated surface, we generate the orthonormal functions $\{v_l:l=1,2,\ldots,40\}$ consisting of $40$ eigenfunctions of the Laplace-Beltrami operator computed on $\mathcal{M}_{\mathcal{T}}$. Then, we generate two sets of smooth functional data supported on $\mathcal{M}_{\mathcal{T}}$, with identical within-group covariance structures, as follows:
\begin{equation}\label{eq:sim_generation}
\begin{aligned}
&\text{Group 1:} \quad
x_{i1}=w_{i 1} v_{1}+w_{i 2} v_{2}+\ldots+ w_{i 40} v_{40} \quad i=1, \ldots, n, \\
&\text{Group 2:} \quad
x_{i2}=\alpha\mu+u_{i 1} v_{1}+u_{i 2} v_{2}+\ldots+ u_{i 40} v_{40} \quad i=1, \ldots, n,
\end{aligned}
\end{equation}
where $u_{ij}$ and $w_{ij}$ are zero-mean independent random variables that represent the scores and are distributed according to a normal distribution with variance $\sigma_j^2$ decreasing in $j$. The function $\mu$ is the groups' difference, chosen to be a fixed linear combination of the eigenfunctions $\{v_l:l=1,2,\ldots,40\}$, and its magnitude is controlled by a parameter $\alpha > 0$ defining the `difficulty' of the classification problem, that is, the signal-to-noise ratio. We then identify three regimes, with $\alpha \in \{0.2,0.4,0.6\}$. Given a new function $x^*$, the aim is to recover the group this observation belongs to. 

\begin{figure}[!htb]
\centering
\includegraphics[scale=0.8]{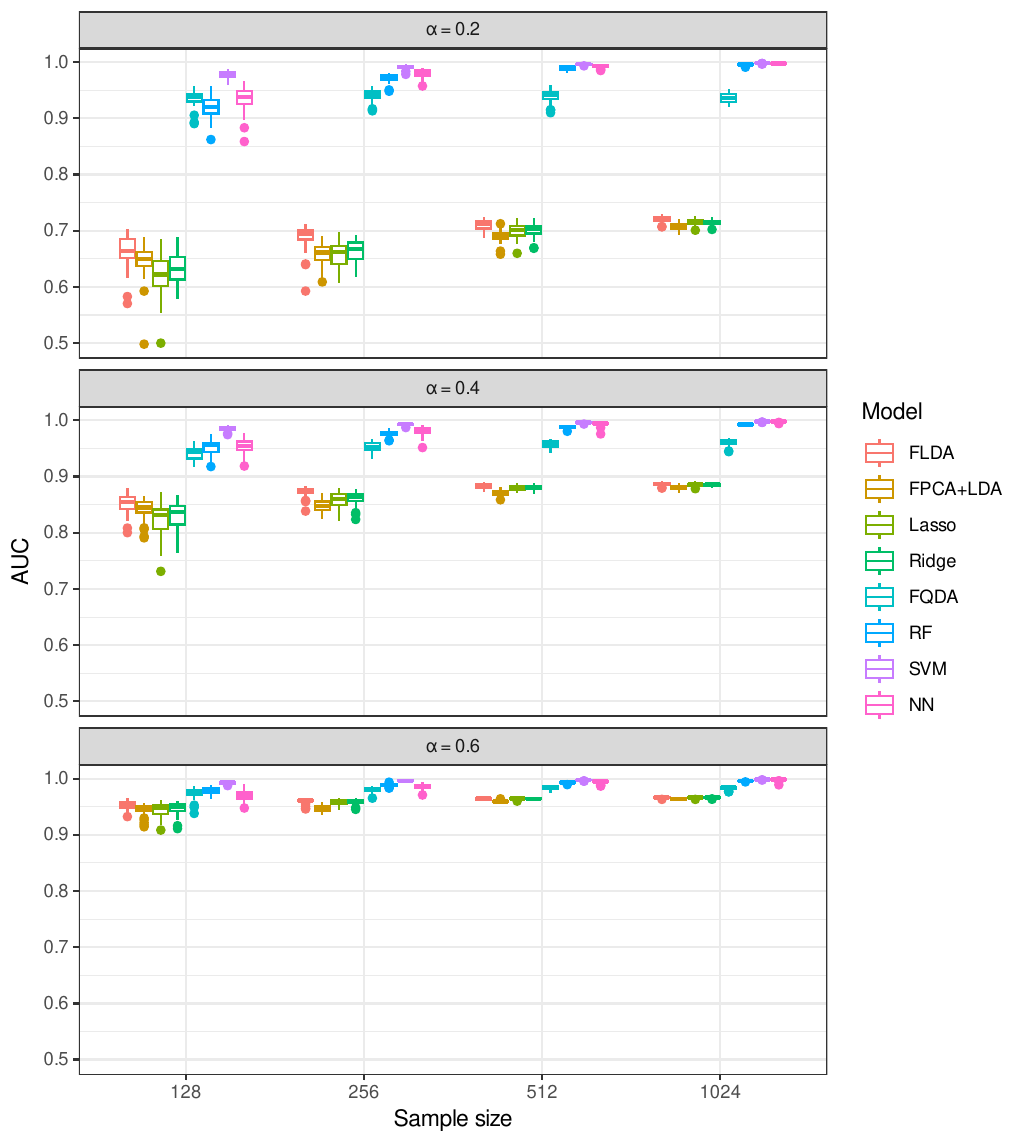}
\caption{Results of the simulation study for heterogeneous covariance structures across different sample sizes ($n=128,256,512,1024$) and signal-to-noise ratios ($\alpha=0.2,0.4,0.6$), where $\alpha$ reflects the strength of the discriminant signal. The prediction accuracy was evaluated through AUC and the simulations were repeated 50 times for each setting.}
\label{fig:AUC_het}
\end{figure}

We compare the proposed FLDA method against the models introduced in Section~\ref{sec:appl_comparison}, e.g., FPCA followed by LDA, Lasso and Ridge logistic regression, random forests, support vector machines, and fully-connected feed-forward neural networks. For standard multivariate models, we use the values of $x_i$ at the vertices of $\mathcal{M}_{\mathcal{T}}$ to construct the data matrix. We evaluate the performance of each method for different sample sizes of the training data: $n = 128, 256, 512$, and $1024$. For every $n$, besides the training set, we generate a validation set of size $n$ and a test set including $20$K samples, and repeat the experiment 50 times. To select the hyperparameters of the models, we employ a validation set approach. We summarize the classification performances on the test set in Figure~\ref{fig:AUC_homo}. 

\begin{figure}[!htb]
\centering
\includegraphics[scale=0.8]{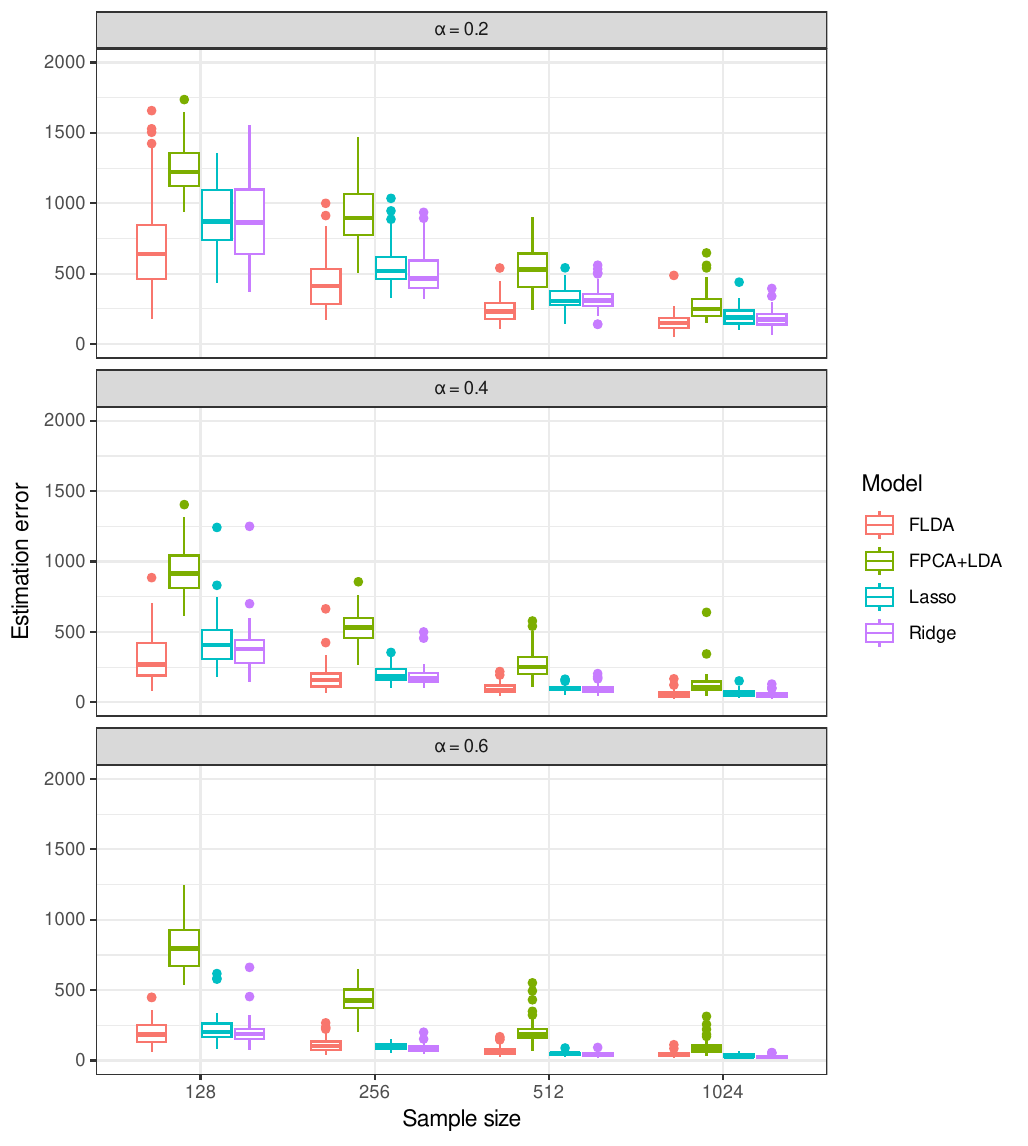}
\caption{Results of the simulation study to compare the performance of the different linear methods considered, using homogeneous covariances, for various sample sizes ($n=128,256,512,1024$) and signal-to-noise ratios ($\alpha=0.2,0.4,0.6$). Here, we measure the performance using the estimation error $\| \hat \beta -\beta^0 \|^2_{\calL^2(\M)}$, with $\hat \beta$ an appropriately normalized version of the estimate of the true functional parameter $\beta^0$.}
\label{fig:L2_homo}
\end{figure}

Figure~\ref{fig:AUC_homo} shows that by increasing $\alpha$, or the sample size $n$, the performances of all models tend to improve. This is expected, given that a larger $\alpha$ makes the classification task easier. In addition, a larger sample size allows for a more accurate estimation of the unknown model parameters. In the setting of equal within-class covariance structures, it is well known that the best classifier is linear, so it is not surprising that some of the nonlinear models, e.g., RF, SVM, and NN, show worse performances. In particular, FLDA appears to perform better than the other methods. For large sample sizes, the influence of the regularization terms becomes negligible and the difference in performance starts vanishing.

Next, we explore the performance of the different methods in the setting where the two groups have different within-group covariance structures. Specifically, we generate the data as in equation (\ref{eq:sim_generation}) with $u_{ij}$ zero-mean independent random variables distributed according to a normal with variance $\sigma_j^2$, decreasing in $j$, and $w_{ij}$ zero-mean independent random variables distributed according to a normal with variance $\sigma_{40-j}^2$, and therefore, increasing in $j$. The results of the simulations are shown in Figure~\ref{fig:AUC_het}.

In this setting, it is well known that the best classifier is not linear. Hence, it is not surprising that the linear methods tend to perform worse than the nonlinear ones and that this difference does not vanish by simply increasing the sample size. Moreover, a larger $\alpha$ leads to better performance across all the tested methods. Specifically, the proposed FQDA model outperforms all the linear models, but other non-linear approaches, such as SVM, perform even better. Importantly, the proposed FLDA model performs best among the linear models even when it is misspecified.

Classification performance is not the only relevant metric. In the application motivating this work, we are particularly interested in accurately estimating the classification rule, which for linear models is fully described by a parameter $\hat \beta$. Therefore, in the setting of homogeneous covariances, we compare the performance of the different models with respect to the metric $\| \hat \beta -\beta^0 \|^2_{\calL^2(\M)}$, with $\hat \beta$ an appropriately normalized version of the estimate of the true functional parameter $\beta^0$. Note that the estimates from a linear discriminant analysis and a logistic regression model can be compared, as they assume the same model but estimate their parameters differently \parencite{hastie2009elements}. For the standard multivariate methods, $\hat \beta$ is constructed by interpolating a piecewise linear function to its estimated discrete counterpart. The results, shown in Figure~\ref{fig:L2_homo}, indicate that the proposed FLDA model does not only yield more accurate predictions but also more accurate estimates of the underlying functional parameters.

\bigskip


\section{Proofs}\label{sec:proofs}
Recall that we denote by $C$ the covariance function of the functional predictor and by $\hat C$ its empirical counterpart. Moreover, $K_\M$ denotes the evaluation kernel of the Sobolev space $\calW^2(\M)$ endowed with the norm $\left( \|\Delta_{\M} \beta\|^2_{\calL^2(\M)} + \epsilon \|\beta\|^2_{\calL^2(\M)} \right)^{1/2}$. The sandwich operator $T$ is defined as $T = L_{K_\M}^{\frac{1}{2}} L_{C} L_{K_\M}^{\frac{1}{2}}$. To simplify the notation, we drop the subscripts from $\|\cdot \|_{\calL^2(\M)}$ and $\langle \cdot, \cdot \rangle_{\calL^2(\M)}$. We define $\| A \|_{\op} = \sup_{f : \|f\| =1} \|Af\|$ to be the operator norm of a linear operator $A: \calL^2(\M) \ra \calL^2(\M)$.

Thanks to the fact that $L_{K_{\M}}^{\frac{1}{2}}(\calL^2(\M)) = \calW^2(\M)$ \parencite{cucker2002mathematical}, it is clear that we can reformulate the problem in equation (\ref{eq:model_ls_univ}) as

\begin{equation}
\text{minimize}_{f}  \frac{1}{n} \sum_{i=1}^n \left( y_i -  \langle x_i, L_{K_\M}^{\frac{1}{2}} f \rangle \right)^2 +  \lambda \|f \|^2, 
\end{equation}
whose solution $\hat f_{\lambda}$ is given by
\begin{equation}
\hat f_{\lambda} = (T_n + \lambda I)^{-1} \frac{1}{n} \sum_{i=1}^n y_i L_{K_\M}^{\frac{1}{2}} x_i,
\end{equation}
where $T_n = L_{K_\M}^{\frac{1}{2}} L_{\hat C} L_{K_\M}^{\frac{1}{2}}$. Moreover, the out-of-sample risk $\E^* \left[ \langle X^*, \beta^0 - \hat \beta \rangle \right]^2$ can be easily rewritten as $\left\| T^{\frac{1}{2}}(\hat f_{\lambda} - f_0) \right\|^2$, with $f_0 \in \calL^2(\M)$ such that $f_0 = T^{-1} L_{K_\M}^{\frac{1}{2}} (\mu_2 - \mu_1)$.

Observe that
\begin{align*}
& T^{\frac{1}{2}}(\hat f_{\lambda} - f_0) \\
& = T^{\frac{1}{2}} \left[(T_n + \lambda I)^{-1} \left( \frac{1}{n} \sum_{i=1}^n y_i L_{K_\M}^{\frac{1}{2}} x_i \right) -  f_0 \right]\\
&= T^{\frac{1}{2}}\left[(T_n + \lambda I)^{-1} L_{K_\M}^{\frac{1}{2}} \left( \frac{1}{n} \sum_{i=1}^n y_i x_i -(\mu_2-\mu_1) \right)+ (T_n + \lambda I)^{-1} L_{K_\M}^{\frac{1}{2}} (\mu_2-\mu_1) - T^{-1} L_{K_\M}^{\frac{1}{2}} (\mu_2-\mu_1) \right]
\end{align*}
Let $\hat d = \frac{1}{n}\sum_{i=1}^n y_i  x_i$ and $d = \mu_2-\mu_1$, and notice that $\E[\hat d] = d + o(1)$. Then, we have
\begin{align}\label{eq:bias_var}
& \left\| T^{\frac{1}{2}}(\hat f_{\lambda} - f_0) \right\| \nonumber \\
& \leq \left\| T^{\frac{1}{2}}(T_n + \lambda I)^{-1} L_{K_\M}^{\frac{1}{2}} \left( \hat d - d \right)\right\| + \left\|T^{\frac{1}{2}} \left[ (T_n + \lambda I)^{-1} L_{K_\M}^{\frac{1}{2}} d- T^{-1} L_{K_\M}^{\frac{1}{2}} d \right]\right\| \\
& = I_1 + I_2 \nonumber
\end{align}

We first derive a bound for the term $I_1$ in equation (\ref{eq:bias_var}), i.e., the variance term, and then proceed with bounding the term $I_2$, i.e., the bias term. To accomplish this, we will also use Theorems~\ref{thm:Tdiff}-\ref{thm:Tsum}, stated in Section~\ref{sec:add_results}.

\subsection{Variance}
Simple calculations show that
\begin{align*}
I_1  & = \left\| T^{\frac{1}{2}}(T_n + \lambda I)^{-1} L_{K_\M}^{\frac{1}{2}} \left( \hat d - d \right)\right\| \\
& = \left\| T^{\frac{1}{2}}(T_n + \lambda I)^{-\frac{1}{2}}(T_n + \lambda I)^{-\frac{1}{2}} (T + \lambda I)^{\frac{1}{2}} (T + \lambda I)^{-\frac{1}{2}} L_{K_\M}^{\frac{1}{2}} \left( \hat d - d \right)\right\| \\
& \leq \left\|T^{\frac{1}{2}}(T_n + \lambda I)^{-\frac{1}{2}} \right\|_{\text{op}} \left\|(T_n + \lambda I)^{-\frac{1}{2}}(T + \lambda I)^{\frac{1}{2}} \right\|_{\text{op}} \left\| (T +\lambda I)^{-\frac{1}{2}} L^{\frac{1}{2}}_{K_\M} \left( \hat d - d \right) \right\| \\
& \leq \left\|(T + \lambda I)^{\frac{1}{2}}(T_n + \lambda I)^{-\frac{1}{2}} \right\|_{\text{op}} \left\|(T_n + \lambda I)^{-\frac{1}{2}}(T + \lambda I)^{\frac{1}{2}} \right\|_{\text{op}} \left\| (T +\lambda I)^{-\frac{1}{2}} L^{\frac{1}{2}}_{K_\M} \left( \hat d - d \right) \right\| \\
& \leq \left\|(T + \lambda I)(T_n + \lambda I)^{-1} \right\|_{\text{op}}  \left\|(T +\lambda I)^{-\frac{1}{2}} L^{\frac{1}{2}}_{K_\M} \left( \hat d - d \right) \right\|,
\end{align*}
where in the last inequality we have used $\|A^{\gamma} B^{\gamma}\|_{\op} \leq \|AB\|^{\gamma}_{\op}$, for any $0 < \gamma < 1$ \parencite{blanchard2010optimal}. We can then bound the first term thanks to Theorem~\ref{thm:Tsum}. We therefore turn to the second term. 

First observe that
\begin{align*}
\E \left[\hat d \right] = d + o(1), \qquad \E \left[(\hat d - d) \otimes (\hat d-d)\right] = \frac{1}{n} \left( C - \pi_1 \mu_1 \otimes \mu_1 - \pi_2 \mu_2 \otimes \mu_2 \right) + o(1),
\end{align*}
and, therefore,
\begin{align*}
&\langle L_{K_\M}^{\frac{1}{2}} (\hat d - d), \eta_k \rangle^2\\
& = \langle L_{K_\M}^{\frac{1}{2}} L_{\E \left[(\hat d - d) \otimes (\hat d - d) \right]} L_{K_\M}^{\frac{1}{2}} \eta_k , \eta_k \rangle\\
& = \frac{1}{n} \langle T \eta_k, \eta_k \rangle 
- \frac{\pi_1}{n} \langle  L_{K_\M}^{\frac{1}{2}} L_{\mu_1 \otimes \mu_1}  L_{K_\M}^{\frac{1}{2}} \eta_k, \eta_k \rangle 
- \frac{\pi_2}{n} \langle  L_{K_\M}^{\frac{1}{2}} L_{\mu_2 \otimes \mu_2}  L_{K_\M}^{\frac{1}{2}} \eta_k, \eta_k \rangle \\
&= \frac{1}{n} \left(\tau_k - \pi_1 \|L_{K_\M}^{\frac{1}{2}} \mu_1\|^2 - \pi_2 \|L_{K_\M}^{\frac{1}{2}} \mu_2\|^2\right),
\end{align*}
with $\{ \tau_k \}$ and $\{ \eta_k \}$ denoting the eigenvalues and eigenfunctions of $T$, respectively. As in \textcite{gaynanova2015optimal}, we ignore the bias term $o(1)$.

We then have
\begin{align*}
&\E\left\| (T +\lambda I)^{-\frac{1}{2}} L^{\frac{1}{2}}_{K_\M} \left( \hat d - d \right)\right\|^2 \\
&= \E \left\| \sum_k (T + \lambda I)^{-\frac{1}{2}} \eta_k \langle L_{K_\M}^{\frac{1}{2}} (\hat d - d), \eta_k \rangle \right\|^2 \\
& \leq \frac{1}{n} \sum_k \left( \frac{\tau_k}{\lambda + \tau_k} - \pi_1 \frac{\|L_{K_\M}^{\frac{1}{2}} \mu_1\|^2}{\lambda + \tau_k} - \pi_2 \frac{\|L_{K_\M}^{\frac{1}{2}} \mu_2\|^2}{\lambda + \tau_k}\right)\\
& \leq \frac{1}{n} \operatorname{D}(\lambda).
\end{align*}
Moreover, by appealing to the Markov inequality, we have that with confidence at least $1 - \delta/2$
\begin{equation}\label{eq:var_part2}
\left\|(T +\lambda I)^{-\frac{1}{2}} L^{\frac{1}{2}}_{K_\M} \left( \hat d - d \right)\right\|
 \leq \frac{2}{\delta} \sqrt{\frac{\operatorname{D}(\lambda)}{n}} \leq \frac{1}{\kappa \delta} B_{n,\lambda},
\end{equation}
where $B_{n,\lambda} = \frac{2 \kappa}{\sqrt{n}} \left(\frac{\kappa}{\sqrt{n \lambda}} + \sqrt{D(\lambda)} \right)$ and $\kappa$ is defined such that $\kappa^2 = \esssup \|L^{1/2}_{K_\M} X\|^2$. 

Thanks to the inequality in equation (\ref{eq:var_part2}) and Theorem~\ref{thm:Tsum}, with probability at least $1-\delta$, we have
\begin{align}\label{eq:bound_variance}
& I_1  = \left\| T^{\frac{1}{2}}(T_n + \lambda I)^{-1} L_{K_\M}^{\frac{1}{2}} \left( \frac{1}{n} \sum_{i=1}^n y_i  x_i - d \right)\right\| 
\leq \left( \frac{B_{n,\lambda} \log(2/\delta)}{\lambda} +1 \right)^2 \left( \frac{1}{2 \kappa \delta} B_{n,\lambda} \right).
\end{align}
We now turn to the bias term $I_2$ in equation (\ref{eq:bias_var}).

\subsection{Bias}

By simple calculations, we have
\begin{align*}
I_2 &= \left\|T^{\frac{1}{2}} \left[ (T_n + \lambda I)^{-1} L_{K_\M}^{\frac{1}{2}} d - T^{-1} L_{K_\M}^{\frac{1}{2}} d \right]\right\|\\ 
&= \left\| T^{\frac{1}{2}} (T_n + \lambda I)^{-1} [T - T_n +\lambda I] \right\|_{\text{op}} \|f_0\|\\
&\leq  \left\| (T+\lambda I)^{\frac{1}{2}}(T_n+\lambda I)^{-\frac{1}{2}}\right\|_{\text{op}}  \left\| (T_n+\lambda I)^{-\frac{1}{2}} (T+\lambda I)^{\frac{1}{2}}\right\|_{\text{op}} \left\|(T + \lambda I)^{-\frac{1}{2}} [T - T_n +\lambda I] \right\|_{\text{op}} \|f_0\|\\
&\leq  \left\| (T+\lambda I)(T_n+\lambda I)^{-1}\right\|_{\text{op}} \left\|(T + \lambda I)^{-\frac{1}{2}} [T - T_n +\lambda I] \right\|_{\text{op}} \|f_0\|\\
&\leq  \left\| (T+\lambda I)(T_n+\lambda I)^{-1}\right\|_{\text{op}} \left\|(T + \lambda I)^{-\frac{1}{2}} [T - T_n] \right\|_{\text{op}} \|f_0\|\\
&+  \left\| (T+\lambda I)(T_n+\lambda I)^{-1}\right\|_{\text{op}} \lambda \left\|(T + \lambda I)^{-\frac{1}{2}} \right\|_{\text{op}} \|f_0\|.
\end{align*}
Moreover, by using Theorems~\ref{thm:Tdiff}-\ref{thm:Tsum}, and the inequality
\begin{equation}
\lambda \left\|(T + \lambda I)^{-\frac{1}{2}}\right\|_{\text{op}} \leq \frac{\lambda}{\sqrt{\lambda}} = \sqrt{\lambda},
\end{equation}
we have that with probability at least $1-\delta$ 
\begin{align}\label{eq:bound_bias}
&\left\|T^{\frac{1}{2}} \left[ (T_n + \lambda I)^{-1} L_{K_\M}^{\frac{1}{2}} d - T^{-1} L_{K_\M}^{\frac{1}{2}} d \right]\right\| 
\leq \left( \frac{B_{n,\lambda} \log(4/\delta)}{\sqrt{\lambda}} + 1\right)^2 (B_{n,\lambda} \log(4/\delta) + \sqrt{\lambda}) \|f_0\|.
\end{align}

\subsection{Final rates}
Using the variance and bias bounds in equations (\ref{eq:bound_variance}) and (\ref{eq:bound_bias}), we get the following bound for the out-of-sample risk:
\begin{align}
& \left\| T^{\frac{1}{2}}(\hat f_{\lambda} - f_0) \right\|^2 \\
& \leq 2 \left\| T^{\frac{1}{2}}(T_n + \lambda I)^{-1} L_{K_\M}^{\frac{1}{2}} \left( \hat d - d \right)\right\|^2 + 2\left\|T^{\frac{1}{2}} \left[ (T_n + \lambda I)^{-1} L_{K_\M}^{\frac{1}{2}} d- T^{-1} L_{K_\M}^{\frac{1}{2}} d \right]\right\|^2 \\
&\leq 2 \left( \frac{B_{n,\lambda} \log(2/\delta)}{\sqrt{\lambda}} +1 \right)^4 \left( \frac{1}{2 \kappa \delta} B_{n,\lambda} \right)^2\\
&+ 2 \left( \frac{B_{n,\lambda} \log(4/\delta)}{\sqrt{\lambda}} + 1\right)^4 \left(B_{n,\lambda} \log(4/\delta) + \sqrt{\lambda} \right)^2 \|f_0\|^2\\
&\leq 2 \frac{\lambda}{\delta^2} \left( \frac{B_{n,\lambda} \log(2/\delta)}{\sqrt{\lambda}} +1 \right)^4 \left( \frac{1}{2 \kappa } \frac{B_{n,\lambda}}{\sqrt{\lambda}} \right)^2\\
&+ 2 \lambda \left( \frac{B_{n,\lambda} \log(4/\delta)}{\sqrt{\lambda}} + 1\right)^4 \left(\frac{B_{n,\lambda}}{\sqrt{\lambda}} \log(4/\delta) + 1 \right)^2 \|f_0\|^2\\
&\leq C \frac{\left(\log(4/\delta)\right)^6}{\delta^2} n^{-\frac{1}{1+\theta}},
\end{align}
where in the last inequality we have chosen $\lambda = n^{-\frac{1}{1+\theta}}$ and used the inequality $B_{n,\lambda} \leq 2 \kappa (\kappa+\sqrt{c}) \sqrt{\lambda}$. The constant $c$ here is from Assumption~\ref{assm:effective}.

This implies the result stated in Theorem~\ref{thm:risk_manifold}.

\subsection{Auxiliary results}\label{sec:add_results}
\begin{theorem}\label{thm:hoeff}
Let $\calH$ be a Hilbert space endowed with a norm $\|\cdot \|_{\calH}$ and let $X$ be a random variable taking values in $\calH$. Let $\xi_1, \ldots, \xi_n$ be a sequence of $n$ independent copies of $X$. Assume that $\|\xi \|_{\calH} \leq M$ (a.s.), then for $0 < \delta < 1$
\[
\left\| \frac{1}{n} \sum_{i=1}^{n} \left(\xi_i - \E[\xi] \right) \right\|^2_{\calH} \leq \frac{2M \log(2/\delta)}{n} + \sqrt{\frac{2 \E\left[\|\xi\|_\calH^2\right]\log(2/\delta)}{n}}
\]
with probability at least $1 - \delta$.
\end{theorem}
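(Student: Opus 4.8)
\emph{Proof proposal.} The plan is to establish the deviation bound for the $\calH$-norm $\big\|\tfrac1n\sum_{i=1}^n(\xi_i-\E[\xi])\big\|_{\calH}$ (the useful, first-power form of the displayed inequality) by way of a Bernstein/Bennett-type exponential tail bound for Hilbert-space-valued sums, followed by inversion. First I would center: set $\eta_i=\xi_i-\E[\xi]$, so the $\eta_i$ are i.i.d., zero-mean, and bounded by $\|\eta_i\|_{\calH}\le 2M$ a.s.\ (triangle inequality together with $\|\E[\xi]\|_{\calH}\le\E\|\xi\|_{\calH}\le M$). Write $S_n=\sum_{i=1}^n\eta_i$ and $\sigma^2:=\E\|\eta\|_{\calH}^2=\E\|\xi\|_{\calH}^2-\|\E[\xi]\|_{\calH}^2\le \E\|\xi\|_{\calH}^2$. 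The target then reduces to a high-probability bound on $\|S_n\|_{\calH}/n$ whose variance proxy is the stated $\E\|\xi\|_{\calH}^2$.

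The analytic heart is an exponential-moment (Chernoff) bound for the \emph{norm} $\|S_n\|_{\calH}$. For a scalar sum this is textbook Bernstein; the work is to transfer it to $\calH$, where the norm is nonlinear. I would build an exponential supermartingale along the natural filtration $\calF_k$ generated by $\eta_1,\dots,\eta_k$. Because a Hilbert space is $2$-uniformly smooth, the increment $\|S_{k-1}+\eta_k\|_{\calH}$ admits a second-order control that, after taking $\E[\,\cdot\mid\calF_{k-1}]$ and using $\E[\eta_k\mid\calF_{k-1}]=0$, lets the one-step conditional moment generating function close with the \emph{same} constants as in the scalar case. To avoid the nonsmoothness of $x\mapsto\exp(\lambda\|x\|_{\calH})$ at the origin I would run the recursion with Pinelis's device $F_\lambda(x)=\cosh(\lambda\|x\|_{\calH})$, establishing a bound of the schematic form $\E[F_\lambda(S_k)\mid\calF_{k-1}]\le F_\lambda(S_{k-1})\,\exp\!\big(\tfrac12\lambda^2\,\E\|\eta_k\|_{\calH}^2\,g(2M\lambda)\big)$, with $g$ the Bennett factor coming from boundedness. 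Iterating over $k$ and using $\cosh\ge\tfrac12\exp$ yields $P\big(\|S_n\|_{\calH}\ge r\big)\le 2\exp\!\big(-\,\mathrm{Bennett}(r;\,n\sigma^2,\,2M)\big)$ for the usual Bennett exponent.

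Finally I would invert. Setting the exponent equal to $\log(2/\delta)$ and solving the resulting quadratic in $r$, the sub-Gaussian part of the Bennett exponent produces the term $\sqrt{2\sigma^2\log(2/\delta)/n}$, while the boundedness (sub-exponential) part produces a term of order $M\log(2/\delta)/n$, matching the stated $2M\log(2/\delta)/n$ after tracking the constants; dividing by $n$ and bounding $\sigma^2\le\E\|\xi\|_{\calH}^2$ gives
\[
\Big\| \frac{1}{n}\sum_{i=1}^n(\xi_i-\E[\xi])\Big\|_{\calH}\le \frac{2M\log(2/\delta)}{n}+\sqrt{\frac{2\,\E\|\xi\|_{\calH}^2\,\log(2/\delta)}{n}}
\]
with probability at least $1-\delta$. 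As a shortcut, this Bennett tail, and hence the whole statement, also follows by directly invoking Pinelis's martingale inequality in $2$-smooth Banach spaces and inverting; I would present the self-contained supermartingale argument and relegate Pinelis to a remark.

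The main obstacle is precisely the passage from scalar to Hilbert-valued concentration, i.e.\ obtaining the exponential bound for $\|S_n\|_{\calH}$ with dimension-free constants. The naive reduction $\|v\|_{\calH}=\sup_{\|u\|_{\calH}\le1}\langle u,v\rangle$ followed by a union bound over a net fails outright, since in the infinite-dimensional $\calH$ the net is uncountable and the union bound diverges; the nonsmoothness of the norm at the origin is a further nuisance. The resolution, and the technically delicate step, is the geometric input that Hilbert spaces are $2$-uniformly smooth, which is exactly what makes the one-step recursion close with the scalar constants.
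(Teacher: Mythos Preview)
Your proposal is correct and, in fact, goes well beyond what the paper does: the paper's own ``proof'' of this theorem is simply a citation to \textcite{Pinelis1994}, with no argument given. What you sketch \emph{is} the Pinelis argument (the $\cosh$ supermartingale exploiting $2$-uniform smoothness of Hilbert space, followed by a Bennett inversion), so there is no methodological difference---you are supplying the content of the cited reference rather than an alternative route. You also correctly flag that the displayed inequality should be read with the first power of the norm on the left, as the $1/n$ and $1/\sqrt{n}$ scaling on the right makes clear.
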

\begin{proof}
A proof can be found in \textcite{pinelis2007optimum}.
\end{proof}

Using Theorem~\ref{thm:hoeff}, it can be shown that the following two theorems hold.

\begin{theorem}\label{thm:Tdiff}
Under Assumption~\ref{assm:bounded}, for any $0 < \delta < 1$, the inequality 
\begin{equation}
\|(T + \lambda I)^{-\frac{1}{2}}(T - T_n)\|_{\op} \leq B_{n,\lambda} \log(2/\delta) 
\end{equation}
holds with confidence at least $1-\delta$.
\end{theorem}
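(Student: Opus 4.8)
The plan is to bound $\|(T+\lambda I)^{-1/2}(T-T_n)\|_{\op}$ by relating $T-T_n$ to a sample average of Hilbert--Schmidt (or rank-one) operators and then invoking the Hoeffding-type concentration bound of Theorem~\ref{thm:hoeff}. First I would write $T - T_n = L_{K_\M}^{1/2}(L_C - L_{\hat C})L_{K_\M}^{1/2}$, and recall that $L_C = \E[X\otimes X]$ while $L_{\hat C} = \tfrac1n\sum_i x_i\otimes x_i$, so that $L_C - L_{\hat C} = \E[\zeta] - \tfrac1n\sum_i \zeta_i$ with $\zeta_i = x_i\otimes x_i$ i.i.d.\ copies of $\zeta = X\otimes X$, viewed as a random element of the Hilbert space of Hilbert--Schmidt operators on $\calL^2(\M)$. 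Under Assumption~\ref{assm:bounded}, $\|X\|\le M_2$ a.s., so $\|\zeta\|_{HS}\le M_2^2$ a.s., and hence $\|L_{K_\M}^{1/2}(\,\cdot\,)L_{K_\M}^{1/2}$ applied to $\zeta$ has Hilbert--Schmidt norm at most $\|L_{K_\M}^{1/2}\|_{\op}^2 M_2^2 = \kappa^2$.

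The key maneuver is that multiplication by $(T+\lambda I)^{-1/2}$ on the left must be absorbed into the definition of the random element before applying concentration, since otherwise the bound is dimension-dependent. So I would set $\xi_i = (T+\lambda I)^{-1/2}L_{K_\M}^{1/2}(x_i\otimes x_i)L_{K_\M}^{1/2}$ (as a Hilbert--Schmidt operator), note that $\E[\xi_i]$ corresponds to $(T+\lambda I)^{-1/2}T$, and compute the two quantities Theorem~\ref{thm:hoeff} needs: an a.s.\ bound $\|\xi_i\|_{HS}\le \|(T+\lambda I)^{-1/2}\|_{\op}\,\kappa^2 \le \kappa^2/\sqrt{\lambda}$, and a second-moment bound $\E[\|\xi\|_{HS}^2]$. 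For the latter, the standard computation gives $\E\|\xi\|_{HS}^2 \le \kappa^2\,\E\big[\langle X, L_{K_\M}(T+\lambda I)^{-1}L_{K_\M}X\rangle\big]$-type expression; using $\|L_{K_\M}^{1/2}X\|\le\kappa$ and the trace identity this is controlled by $\kappa^2 D(\lambda)$ up to constants, where $D(\lambda)=\operatorname{Tr}((T+\lambda I)^{-1}T)$ is the effective dimension. Plugging these two bounds into Theorem~\ref{thm:hoeff} yields, with probability at least $1-\delta$,
\[
\|(T+\lambda I)^{-1/2}(T-T_n)\|_{\op} \le \|(T+\lambda I)^{-1/2}(T-T_n)\|_{HS} \le \sqrt{\frac{2\kappa^2\log(2/\delta)}{n\lambda}} + \left(\frac{2\kappa^2 D(\lambda)\log(2/\delta)}{n}\right)^{1/2},
\]
which after bounding $\sqrt{a}+\sqrt{b}\le\sqrt{2}\sqrt{a+b}$ and collecting constants is at most $B_{n,\lambda}\log(2/\delta)$ with $B_{n,\lambda} = \tfrac{2\kappa}{\sqrt n}\big(\tfrac{\kappa}{\sqrt{n\lambda}}+\sqrt{D(\lambda)}\big)$, as claimed (possibly after a harmless adjustment of which of $\log(2/\delta)$ or $\sqrt{\log(2/\delta)}$ appears, which I would absorb since $\log(2/\delta)\ge\sqrt{\log(2/\delta)}$ for $\delta$ small).

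The main obstacle I anticipate is the bookkeeping around the second-moment term: making precise that $\E\|\xi\|_{HS}^2$ is genuinely controlled by $\kappa^2 D(\lambda)/1$ (and not by something larger) requires carefully expanding $\|(T+\lambda I)^{-1/2}L_{K_\M}^{1/2}(X\otimes X)L_{K_\M}^{1/2}\|_{HS}^2$, bounding one factor of $L_{K_\M}^{1/2}X\otimes L_{K_\M}^{1/2}X$ in operator norm by $\kappa^2$ and recognizing the remaining expectation $\E\,\operatorname{Tr}\big((T+\lambda I)^{-1}L_{K_\M}^{1/2}(X\otimes X)L_{K_\M}^{1/2}\big) = \operatorname{Tr}((T+\lambda I)^{-1}T) = D(\lambda)$. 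A secondary point is that the Hoeffding inequality as stated bounds the squared norm, so I must take a square root at the end and handle the resulting $\sqrt{a+\sqrt b}$ expression; this is routine but needs the elementary inequality $\sqrt{u+v}\le\sqrt u+\sqrt v$ applied once. Everything else is a direct substitution into Theorem~\ref{thm:hoeff}.
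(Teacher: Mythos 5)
The paper does not actually prove this theorem in-house --- its ``proof'' is a citation --- so the comparison is with the standard argument in the source it cites, and your proposal reproduces that argument faithfully: center $T-T_n$ as a sample mean of the rank-one operators $(L_{K_\M}^{1/2}x_i)\otimes(L_{K_\M}^{1/2}x_i)$, absorb $(T+\lambda I)^{-1/2}$ into the summands \emph{before} invoking the Hilbert-space concentration bound, control the summands almost surely by $\kappa^2/\sqrt{\lambda}$, and control the second moment by $\kappa^2 D(\lambda)$ via the trace identity $\E\operatorname{Tr}\bigl((T+\lambda I)^{-1}L_{K_\M}^{1/2}(X\otimes X)L_{K_\M}^{1/2}\bigr)=D(\lambda)$. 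The decomposition and the second-moment computation are exactly right, and bounding the operator norm by the Hilbert--Schmidt norm is the correct way to reduce to Theorem~\ref{thm:hoeff}.

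Two bookkeeping points need repair. First, your final display mis-transcribes the range term: the Pinelis--Bernstein bound contributes $2M\log(2/\delta)/n$ with $M=\kappa^2/\sqrt{\lambda}$, i.e.\ $2\kappa^2\log(2/\delta)/(n\sqrt{\lambda})$, which is precisely the first term of $B_{n,\lambda}\log(2/\delta)$; the quantity you wrote, $\bigl(2\kappa^2\log(2/\delta)/(n\lambda)\bigr)^{1/2}$, scales like $n^{-1/2}\lambda^{-1/2}$ and is \emph{not} dominated by $B_{n,\lambda}\log(2/\delta)$ for large $n$. Second, your proposed handling of the square in Theorem~\ref{thm:hoeff} would not work as described: if you literally take the square root of the stated bound on $\|\cdot\|^2_{\calH}$, the variance contribution becomes $\bigl(2\kappa^2 D(\lambda)\log(2/\delta)/n\bigr)^{1/4}$, which decays like $n^{-1/4}$ and cannot be absorbed into $B_{n,\lambda}\log(2/\delta)\asymp n^{-1/2}$ --- this is not a ``routine'' adjustment. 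The actual resolution is that the exponent $2$ on the left-hand side of Theorem~\ref{thm:hoeff} is a typo (the standard Pinelis inequality bounds the norm itself), and the unsquared form is what your displayed inequality implicitly uses. With those two corrections the comparison with $B_{n,\lambda}\log(2/\delta)$ closes, since $\sqrt{2\log(2/\delta)}\le 2\log(2/\delta)$ for all $0<\delta<1$.
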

\begin{proof}
A proof can be found in \textcite{tong2018analysis}.
\end{proof}

\begin{theorem}\label{thm:Tsum}
Under Assumption~\ref{assm:bounded}, for any $0 < \delta < 1$ with confidence at least $1-\delta$, 
\begin{equation}
\|(T + \lambda I)(T_n + \lambda I)^{-1}\|_{\op} \leq \left( \frac{B_{n,\lambda} \log(2/\delta)}{\sqrt{\lambda}} +1 \right)^2.
\end{equation}
Moreover, the confidence set is the same as the one in Theorem \ref{thm:Tdiff}.
\end{theorem}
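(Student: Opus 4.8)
The plan is to reduce the bivariate statement to the univariate analysis already carried out for Theorem~\ref{thm:risk_manifold}, by exploiting the product structure of the Hilbert space $\calH = \calL^2(\R^3,\R^3) \times \calL^2(\M)$ and the block-diagonal form of $\vect K$. First I would verify that the entire scaffolding used in the univariate proof transfers verbatim. Because $\vect K$ is diagonal, its square root $L_{\vect K}^{1/2}$ acts componentwise, and so $L_{\vect K}^{1/2}(\calH) = \calV(\R^3) \times \calW^2(\M)$ is exactly the product of the two RKHSs over which the minimization (\ref{eq:model_ls_biv}) is posed. Under the common penalty $\lambda_1 = \lambda_2 = \lambda$ from Assumption~\ref{assm:effective-mult}, the objective (\ref{eq:model_ls_biv}) can be reparametrized through $\vect f$ with $\vect\beta = L_{\vect K}^{1/2}\vect f$, turning it into the single regularized least-squares problem $\frac{1}{n}\sum_i(y_i - \langle \vect Y_i, L_{\vect K}^{1/2}\vect f\rangle_{\calH})^2 + \lambda\|\vect f\|_{\calH}^2$, whose solution is $\hat{\vect f}_\lambda = (T_n + \lambda I)^{-1}\frac{1}{n}\sum_i y_i L_{\vect K}^{1/2}\vect Y_i$ with $T_n = L_{\vect K}^{1/2} L_{\hat{\vect C}} L_{\vect K}^{1/2}$ and $T = L_{\vect K}^{1/2} L_{\vect C} L_{\vect K}^{1/2}$.

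Next I would rewrite the target risk in operator form. Setting $\vect f_0 = T^{-1} L_{\vect K}^{1/2}(\vect\mu_2 - \vect\mu_1)$, which is well-defined by Assumption~\ref{assm:smooth-mult}, the excess prediction risk satisfies the identity
\begin{equation}
\E^{*}\left[\langle \vect Y^*, \vect\beta^0 - \vect{\hat\beta}\rangle_{\calH}\right]^2 = \left\|T^{1/2}(\hat{\vect f}_\lambda - \vect f_0)\right\|_{\calH}^2,
\end{equation}
exactly as in the univariate derivation, since $\E^*[\langle \vect Y^*, \cdot\rangle^2] = \langle \cdot, L_{\vect C}\,\cdot\rangle$ and conjugation by $L_{\vect K}^{1/2}$ produces $T$. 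From here the bias--variance split (\ref{eq:bias_var}) applies without change, yielding $I_1 + I_2$ with $\hat{\vect d} = \frac{1}{n}\sum_i y_i \vect Y_i$ and $\vect d = \vect\mu_2 - \vect\mu_1$.

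The two analytic ingredients then need to be re-established in the $\calH$-valued setting. For the variance term $I_1$, the key is Assumption~\ref{assm:bounded-mult}, which bounds $\|\vect Y\|_{\calH} \le M$ a.s.\ and thus lets me apply the Hilbert-space Hoeffding bound of Theorem~\ref{thm:hoeff} with $\calH$ in place of $\calL^2(\M)$; the computation of $\E[(\hat{\vect d} - \vect d)\otimes(\hat{\vect d} - \vect d)]$ proceeds identically and produces the same $\frac{1}{n}D(\lambda)$ control on $\E\|(T+\lambda I)^{-1/2}L_{\vect K}^{1/2}(\hat{\vect d} - \vect d)\|^2$, with $D(\lambda)$ now the effective dimension from Assumption~\ref{assm:effective-mult}. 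For the operator factors $\|(T+\lambda I)(T_n+\lambda I)^{-1}\|_{\op}$ and $\|(T+\lambda I)^{-1/2}(T - T_n)\|_{\op}$, I would note that Theorems~\ref{thm:Tdiff}--\ref{thm:Tsum}, whose proofs in \textcite{Tong2018} rely only on $T_n - T$ being an average of $n$ i.i.d.\ bounded self-adjoint operators, hold verbatim once the boundedness hypothesis is read as Assumption~\ref{assm:bounded-mult}. Combining these exactly as in equations (\ref{eq:bound_variance}) and (\ref{eq:bound_bias}), and choosing $\lambda \asymp n^{-1/(1+\theta)}$ with the elementary bound $B_{n,\lambda} \le 2\kappa(\kappa + \sqrt{c})\sqrt{\lambda}$, delivers the $\calO_p(n^{-1/(1+\theta)})$ rate.

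The main obstacle, and the only place where genuine care beyond bookkeeping is needed, is confirming that the diagonal reproducing-kernel operator $L_{\vect K}$ is the correct object linking the \emph{two distinct} regularization mechanisms in (\ref{eq:model_ls_biv}): the geometric component is penalized by the native RKHS norm $\|\beta^G\|_{\calV(\R^3)}^2$, whereas the thickness component is penalized by the differential functional $J(\beta^F)$. I would show that $J^{1/2}$ is precisely the RKHS norm induced by the implicit kernel $K_\M$ (as established in the Section~\ref{sec:multivariate} theory discussion), so that both blocks of the penalty are genuine squared RKHS norms and the reparametrization $\vect\beta = L_{\vect K}^{1/2}\vect f$ with $\|\vect f\|_{\calH}^2 = \|\beta^G\|_{\calV}^2 + J(\beta^F)$ is legitimate; this is what makes the single common $\lambda$ of Assumption~\ref{assm:effective-mult} collapse the problem onto the univariate template. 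Once this identification is in place, the remainder is a faithful transcription of the univariate argument with $\calL^2(\M)$ replaced by $\calH$ and $L_{K_\M}$ replaced by $L_{\vect K}$.
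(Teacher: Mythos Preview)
Your proposal is aimed at the wrong statement. The theorem in question is Theorem~\ref{thm:Tsum}, the auxiliary operator-norm inequality
\[
\|(T + \lambda I)(T_n + \lambda I)^{-1}\|_{\op} \le \Bigl(\tfrac{B_{n,\lambda}\log(2/\delta)}{\sqrt{\lambda}} + 1\Bigr)^2,
\]
stated under Assumption~\ref{assm:bounded} in the \emph{univariate} setting. What you have written is, from the first sentence onward, a sketch of Theorem~\ref{thm:risk_biv}: the product Hilbert space $\calH$, the block-diagonal kernel $\vect K$, the reparametrization $\vect\beta = L_{\vect K}^{1/2}\vect f$, the bias--variance split, and the final rate $\calO_p(n^{-1/(1+\theta)})$. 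None of this addresses Theorem~\ref{thm:Tsum}. Worse, midway through your own argument you \emph{invoke} Theorems~\ref{thm:Tdiff}--\ref{thm:Tsum} as known inputs (``hold verbatim once the boundedness hypothesis is read as Assumption~\ref{assm:bounded-mult}''), which makes the proposal circular as a proof of Theorem~\ref{thm:Tsum}.

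For comparison, the paper does not give a self-contained proof of Theorem~\ref{thm:Tsum} either; it simply cites \textcite{Tong2018}. The underlying argument there is operator-algebraic rather than statistical: one expresses $(T+\lambda I)(T_n+\lambda I)^{-1}$ through the perturbation $T - T_n$ sandwiched by powers of $(T+\lambda I)^{-1}$, and then controls the resulting factor using the concentration bound of Theorem~\ref{thm:Tdiff} together with $\|(T+\lambda I)^{-1/2}\|_{\op}\le \lambda^{-1/2}$. This is also why the paper emphasizes that the confidence set coincides with that of Theorem~\ref{thm:Tdiff}. A correct proposal would reproduce that deduction (or cite it), not the multivariate risk analysis.
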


\begin{proof}
A proof can be found in \textcite{tong2018analysis}.
\end{proof}

\subsection{Multivariate model}

Thanks to the multivariate notation introduced in Section~\ref{sec:multivariate}, we can reformulate the multivariate model in equation (\ref{eq:model_ls_biv}) as
\begin{equation}
\text{minimize}_{\vect f}  \frac{1}{n} \sum_{i=1}^n \left( y_i -  \langle \vect x_i, L_{\vect K}^{\frac{1}{2}} \vect{f} \rangle_{\calH} \right)^2 +  \lambda \| \vect f \|^2_{\calH},
\end{equation}
with $\vect x_i: = (v_i,x_i)$. Therefore, the proof of Theorem~\ref{thm:risk_biv} follows the same lines as that of Theorem~\ref{thm:risk_manifold} and is therefore omitted.

\setcounter{maxnames}{10}
\printbibliography

\end{document}